\newtheorem{theorem}{Theorem}
\theoremstyle{definition}
\newtheorem{lemma}{Lemma}
\newtheorem{definition}{Definition}
\begin{document}

\title{Adiabatic Encoding of Pre-trained MPS Classifiers into Quantum Circuits}

\author{Keisuke Murota}
\affiliation{Department of Physics, The University of Tokyo, Tokyo 113-0033, Japan}
% Add more authors as needed

\date{\today}

\begin{abstract}
Although Quantum Neural Networks (QNNs) offer powerful methods for classification tasks,
the training of QNNs faces two major training obstacles: barren plateaus and local minima.
A promising solution is to first train a tensor-network (TN) model classically and then embed it into a QNN.\
However, embedding TN-classifiers into quantum circuits generally requires 
postselection whose success probability may decay exponentially with the system size. 
We propose an \emph{adiabatic encoding} framework that encodes pre-trained 
MPS-classifiers into quantum MPS (qMPS) circuits with postselection, and gradually 
removes the postselection while retaining performance.
We prove that training qMPS-classifiers from scratch on a certain artificial dataset is 
exponentially hard due to barren plateaus, but our adiabatic encoding circumvents this issue. 
Additional numerical experiments on binary MNIST also confirm its robustness.
\end{abstract}

\maketitle

\textit{Introduction.---}
Due to the huge success of deep neural networks, 
quantum neural networks (QNNs) have recently attracted significant attention~\cite{Beer2020,Biamonte2017,Schuld2014,Mitarai2018}.
Some works have indeed reported 
possible advantages of QNNs in terms of expressibility~\cite{Sim2019,Beer2020,Abbas2021}, generalizability~\cite{Caro2022,Huang2022}, and 
computational power~\cite{Liu2021,Havlicek2019,Yamasaki2023}. 
However, their optimization remains notoriously challenging. 
One key obstacle is the so-called barren plateaus, in which gradients 
vanish exponentially with the number of qubits
unless one provides a suitable initialization~\cite{McClean2018,Cerezo2021,Wang2021,Grant2019}.
In addition, even if the barren plateaus do not exist, local minima can trap 
the optimization in a poor loss landscape~\cite{You2021,Anschuetz2022,Bittel2021,Cerezo2022}.
To this end, recent works have introduced a synergistic
framework that first pre-train TN models such as 
matrix product states (MPS) or tree tensor networks (TTN)  
to obtain a high-quality solution, which is then encoded into quantum circuits, 
sometimes called as quantum TN (qTN)-circuits. 
After preparing high-quality qTN-circuits by embedding classically pre-trained TN models, 
additional quantum gates can be introduced to reach beyond classically simulatable regimes and thereby further boost 
performance~\cite{Rudolph2023,Iaconis2024,Wall2021,Rudolph2024,Ran2020,Malz2024,Sugawara2025}. 
This two-step strategy has proven effective for tasks such as variational 
quantum eigensolvers (VQEs)~\cite{Peruzzo2014}, advanced state preparation and generative tasks,
and it holds promise for realizing QNNs that outperform state-of-the-art 
classical algorithms.

Extending this synergetic framework to classification tasks appears equally natural and promising: 
one might classically train TN-classifiers~\cite{Stoudenmire2016, Novikov2016, Chen2024, Efthymiou2019, Mossi2024} and then encode them into qTN-circuits
that can be served as promising QNN models for classification.
However, a closer look reveals a critical hindrance: 
strictly realizing the TN-classifiers as QNN models requires postselection on intermediate measurement outcomes.
This introduces severe overhead, often exponentially as the system size grows~\cite{Kodama2022}. 
Therefore, no matter how powerful the classically trained TN-classifiers may be, 
the overhead imposed by postselection effectively prevents us
from harnessing any quantum advantage when directly embedding them into quantum circuits.
There are past efforts to implement TN-classifiers as qTN-circuits~\cite{Rieser2023,Kundu2024,Wall2021-TTN,Wall2022}.
However, they typically proceeded in the hope that 
future developments would resolve the overhead, or attempted partial workarounds with limited success to date.

In this work, we focus on MPS as the most representative TN architecture
and propose an \emph{adiabatic encoding} framework that seamlessly 
converts MPS-classifiers into postselection-free quantum MPS-classifiers (qMPS-classifiers). 
Our key insight is that both the postselection-based and postselection-free circuits 
can be understood as special cases of ``weighted qMPS'' ansätze.
This recognition enables a smooth transition away from postselection to postselection-free circuits while
maintaining the high performance of pre-trained MPS-classifiers.
Although we illustrate our approach using MPS, 
the underlying principles generalize to other TN architectures.
Importantly, this framework not only alleviates the challenges associated with training qTN-circuits for classification, 
but also opens new pathways toward quantum advantage in classification
and other supervised learning tasks.

\textit{Preliminaries.---}
\begin{figure}[tbp]
	\centering
	\includegraphics[width=0.48\textwidth]{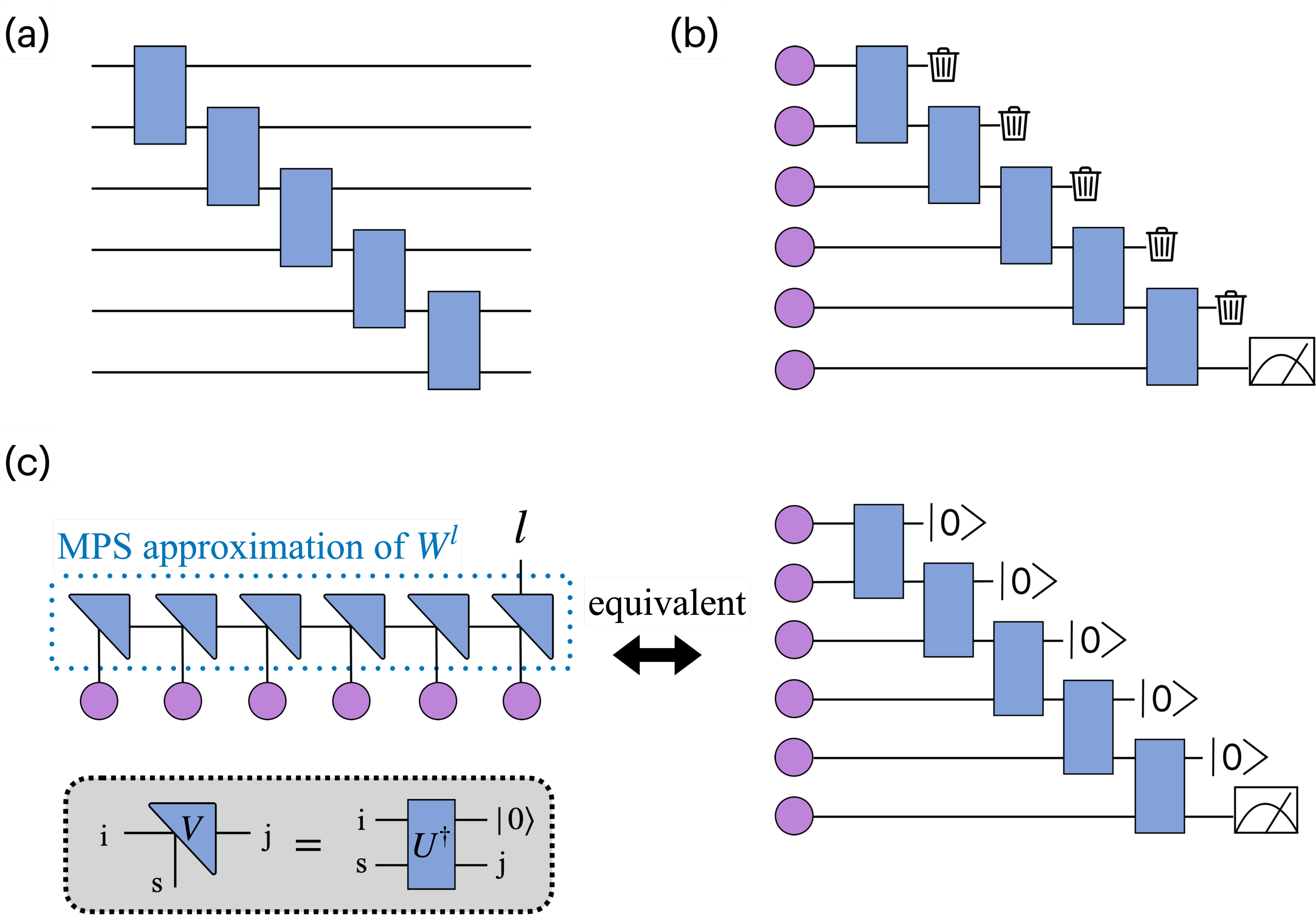}
	\caption{(a) A qMPS-circuit with 2-qubit gates. 
    (b) A qMPS-classifier without postselection. The purple circles represent the embedded input data $\phi(\vec{x})$, rectangles represent 2-qubit gates, and the trash bins indicate the discarding of the measurement outcomes.
    (c) Correspondence between MPS-classifiers and qMPS-classifiers with postselection. 
    The triangle is the isometry $V$.
    $\vert 0 \rangle$ indicates the outcome on which we perform postselection.
    }
	\label{fig:qMPS_circuit}
\end{figure}
We begin by introducing qMPS-circuits denoted by $U_{\text{qMPS}}$.
As illustrated in~\cref{fig:qMPS_circuit} (a), qMPS-circuits have a layered structure where unitary gates are arranged diagonally, 
so that it can preserve the underlying MPS representation.
% The bond dimension $\chi$ of the classical MPS translates into 
The rank 3 tensors with bond dimension $\chi$ in the MPS translates into $\log_2(\chi) + 1$ qubit unitary gates in the qMPS-circuits.
For simplicity, we only consider the case of $\chi=2$ and 2-qubit gates in this work.
This means that extending $U_{\text{qMPS}}$ to larger $\chi$ to achieve higher expressibility 
is straightforward if we are allowed to use multi-qubit gates.
We want to emphasize that research on qMPS-circuits itself is also rapidly progressing~\cite{Foss2021,Niu2022,Huang2023}, 
with growing insights into the expressibility of qMPS-circuits compared to classical MPS~\cite{Araz2022,Haghshenas2022}, 
and the emergence of barren plateaus when attempted to train from random initializations~\cite{Zhao2021,Cervero2023,Liu2022}.

We can use these qMPS-circuits for classification tasks~\cite{Huggins2019}.
The process can be outlined in several steps:
(i) First, the input classical data $\vec{x} \in \mathbb{R}^L$
is embedded into a high-dimensional quantum state 
$|\phi(\vec{x}) \rangle = |\phi(x_1)\rangle \otimes |\phi(x_2)\rangle \otimes \cdots \otimes |\phi(x_L)\rangle \in \mathbb{C}^{2^L}$.
The feature map $\phi(\cdot)$ encodes the real number $x$ into a single qubit state, which can be chosen arbitrarily.
(ii) Apply qMPS-circuits to get $U_{\text{qMPS}}|\phi(\vec{x})\rangle$.
% (3) Measurement of the Final Qubit:
% We measure the final qubit with a Hermitian operator $\hat{M}$ that has eigenvalues corresponding to the class labels $\{0,1\}$.
% Concretely, we apply $\hat{M}$ to the state $U_{\text{qMPS}}|\phi(\vec{x})\rangle$ and estimate 
% $
% P(M=l \mid \vec{x}) = \langle \phi(\vec{x})\mid U_{\text{qMPS}}^\dagger \,\Pi_{l}\, U_{\text{qMPS}} \mid \phi(\vec{x})\rangle,
% $
% where $\Pi_{l}$ is the projector onto the subspace corresponding to the measurement outcome.
(iii) We measure the final qubit in the computational basis and estimate 
\begin{equation}
P(\ell \mid \vec{x}) = \langle \phi(\vec{x})\mid U_{\text{qMPS}}^\dagger \vert \ell \rangle \langle \ell \vert U_{\text{qMPS}} \mid \phi(\vec{x})\rangle,
\end{equation}
where $\ell$ is the binary class label $\in \{0,1\}$.
% (4) Binary Classification Decision:
(iv) Using the estimated probability, we assign the class label of the input $\vec{x}$ to
$\text{argmax}_{\ell} \hat{P}(\ell \mid \vec{x})$.
% i.e., whichever measurement outcome has the larger probability. 
This process can be simply described as in~\cref{fig:qMPS_circuit} (b).
2-qubit unitary gates in $U_{\text{qMPS}}$ are trainable and optimized during the learning process.
In general, the qMPS-classifiers can handle multi-class classification tasks by simultaneously measuring multiple qubits; however, we focus on the binary classification case in this work.
We note that these qMPS-classifiers can be implemented just with 2 qubits if 
quantum hardware supports mid-circuit measurement and reset\cite{Huggins2019}.
We call these qMPS-classifiers as ``qMPS-classifiers without postselection'' or simply as ``qMPS-classifiers''.

The MPS-classifiers operate on the similar principles as the qMPS-classifiers~\cite{Mossi2024,Stoudenmire2016}.
We first represent the tensor network $W^\ell$, a collection of vectors labeled by $\ell$, as a MPS.\
\begin{equation}
    \label{eq:mps_classifier}
    W^\ell_{s_1, s_2, \cdots, s_L} = \sum_{i_1, i_2, \cdots, i_L} A^{s_1}_{i_1} A^{s_2}_{i_2, i_3} \cdots A^{s_{L-1}}_{i_{L-1}, i_L} A^{s_L}_{i_L, \ell}
\end{equation}
where $A^{s_k}_{i_k, i_{k+1}}$ is the 3-rank tensor at the $k$-th site, $s_k$ corresponds to the physical index of size $2$ and
$i_k$ is the virtual index of size $\chi = 2$.
Using this tensor network representation as in~\cref{eq:mps_classifier}, 
we can describe the MPS-classifiers using the following steps:
(i) Given an input data point $\vec{x}$, it is first mapped into a high-dimensional feature space 
via the same feature map $\phi(\vec{x})$.
(ii) We then calculate the contraction of the MPS-classifiers as 
\begin{equation}\label{eq:mps_classifier_function}
    f_\ell(\vec{x}) = W^\ell\cdot\phi(\vec{x}).
\end{equation}
(iii) The predicted class label is obtained by evaluating $\text{argmax}_{\ell} (|f_\ell(\vec{x})|)$.
You may also apply the softmax function to the $f_\ell(\vec{x})$, before calculating the argmax,
to get the probability of each class but the result is the same.

Since there are strong synergies between the qMPS-classifiers and MPS-classifiers,
it appears natural to attempt to transform pre-trained MPS-classifiers into
qMPS-circuits~\cite{Rieser2023}.
As an initial step, we can rewrite the MPS-classifiers in canonical form as a chain of isometries $V$~\cite{Schollwoeck2011}.
Then, by embedding the isometry $V$ of size $4 \times 2$ into a 2-qubit unitary gate $U$ using the 
relationship $V_{\{i, s\}, j} = \langle i, s \vert U^\dagger \vert 0, j \rangle$,
we can exactly rewrite the MPS-classifiers as qMPS-circuits with postselection as illustrated in~\cref{fig:qMPS_circuit} (c).
If one is aware that block encoding of a matrix $A$
generally requires postselection~\cite{Low2019,Nibbi2023},
one can similarly recognize why embedding the MPS-classifiers
into a PQC here also requires postselection.
These classifiers with postselection will be refered to as ``qMPS-classifiers with postselection'' in the following.
Although we can exactly rewrite the MPS-classifiers using qMPS-circuits,
the probability that the measurement outcomes of the $L-1$ qubits in such circuits are all zero 
often decreases exponentially with the system size, which makes the QNN impractical.

\textit{An Artificial Dataset Inducing Exponentially Hard Barren Plateaus and Postselection.---}
We begin by illustrating the dual exponential-cost obstacles that arise for qMPS-classifiers on an artificial problem which we call \emph{First-Qubit Trigger} dataset. 
You can find proofs, numerical experiments and more details for below theorems in the Supplemental Material.

\begin{definition}[First-Qubit Trigger Dataset]
We define the first-qubit trigger dataset to be the set of data points 
$(\vec{x},\ell)$ where $\vec{x}=(\ell,x_2,\ldots,x_L)\in\{0,1\}^L$, 
which means the binary label $\ell$ is precisely the first component of $\vec{x}$.
% Accordingly, the dataset includes all $2^L$ possible bitstrings of length $L$,
% but the classification label depends only on the first bit.
\end{definition}

At first glance, the first-qubit trigger dataset might seem trivial. 
Indeed, it admits perfect classification parameters for both MPS-classifiers and qMPS-classifiers. 
\begin{lemma}[Existence of Perfect Classifiers for the First-Qubit Trigger Dataset]\label{lemma:perfect_classifier}
For the first-qubit trigger dataset, there exist optimal parameters for both MPS-classifiers and qMPS-classifiers.
\end{lemma}

\begin{proof}
\textbf{(MPS-classifiers)} One can set the first tensor and the last tensor to 
$A^{s_1}_{i_1} = \delta_{i_1, s_1}$ and $A^{s_L}_{i_L, \ell} = \delta_{i_L, \ell}$,
and choose all intermediate tensors $A^{s_k}_{i_k, i_{k+1}} = \delta_{i_k, i_{k+1}}$.
These constructions perfectly classify the dataset.
\textbf{(qMPS-classifiers)} One can set every two-qubit gate to be a SWAP.
\end{proof}

Nevertheless, while there exist the optimal parameters for qMPS-classifiers, 
starting from a random initialization and attempting to find 
such solutions using gradient-based training is exponentially hard for qMPS-classifiers. 

\begin{theorem}[\textbf{Exponential Indistinguishability under Haar-Random Initialization}]\label{thm:exp_decay}
Let $\rho_L(\vec{x})$ be the $2\times 2$ reduced density matrix at the final qubit (right before the measurement) for an input $\vec{x} = (x_1, x_2, \dots, x_L)$
in the qMPS-classifiers.
For any two input sequences $(0,\,x_{2:L})$ and $(1,\,x_{2:L})$ that differ only in their first qubit, 
let
\begin{equation}\label{eq:delta_rho_original}
  \Delta \rho_L\bigl(x_{2:L}\bigr)
  \;=\;
  \rho_L\bigl(0,\,x_{2:L}\bigr)
  \;-\;
  \rho_L\bigl(1,\,x_{2:L}\bigr).
\end{equation}
where $x_{2:L} = (x_2, \dots, x_L)$.
Then, with high probability over the Haar-random draws of $\bigl\{U_k\bigr\}_{k=1}^{L-1}$, the following bounds hold:
\begin{align}
  \Pr\Bigl(\|\Delta \rho_L(x_{2:L})\|_{F} < \epsilon\Bigr)
  &= \frac{1}{\epsilon^2}\,\mathcal{O}\!\bigl({(\tfrac{6}{15})}^{L}\bigr),
  \label{eq:delta_rho_bound} \\[6pt]
  \Pr\Bigl(\Bigl\|\frac{\partial \,\Delta \rho_L(x_{2:L})}{\partial (U_k)_{\alpha\beta}}\Bigr\|_{F} < \epsilon\Bigr)
  &= \frac{1}{\epsilon^2}\,\mathcal{O}\!\bigl({(\tfrac{6}{15})}^{L}\bigr),
  \label{eq:delta_rho_grad_bound}
\end{align}
where \(\|\cdot\|_{F}\) is the Frobenius norm and 
\(\tfrac{\partial}{\partial (U_k)_{\alpha\beta}}\) denotes the partial derivative with respect 
to the \((\alpha,\beta)\) element of unitary matrix \(U_k\).
\end{theorem}
To train on the first-qubit trigger dataset, one must update the parameters so that 
\(\rho_L(0, x_{2:L})\) and \(\rho_L(1, x_{2:L})\) become distinguishable for all 
\(x_{2:L}\). 
% \cref{thm:exp_decay} shows, however, that the $\rho_L(0, x_{2:L})$ and $\rho_L(1, x_{2:L})$ 
% are exponentially close to each other and 
% partial derivatives of their difference 
% (\(\Delta\rho_L\)) with respect to any unitary \(U_k\) vanish with high probability 
% at a rate proportional to \(0.4^{L}\). 
\cref{thm:exp_decay} implies that regardless of the choice of loss function,
attempting to push \(\rho_L(0, x_{2:L})\) and \(\rho_L(1, x_{2:L})\) apart through 
gradient-based methods becomes exponentially hard as \(L\) grows.
Notably, this is the first time to prove that there exists a dataset that 
is exponentially hard to train qMPS-circuits for classification task 
from scratch with random initialization. 
You can find more details in the Supplemental Material.

By contrast, classical MPS-classifiers do not suffer from barren plateaus for the same task. 
\begin{theorem}[\textbf{Absence of Barren Plateaus in Classical MPS with Stacked Identity Initialization}]\label{thm:mps_training}
    Consider classical MPS-classifiers,  with tensors $A^{s_k}_{i_k, i_{k+1}}$ 
    initialized using stacked identity with additional small noise in each element\cite{Mossi2024TNInitialization,Dilip2022}.
    % The noise is independently drawn from a normal distribution with mean 0 and variance $\epsilon^2$, 
    % and $\epsilon>0$ is chosen small enough so that $\sqrt{L}\,\epsilon \ll 1$.
    % \[
    % p_{\text{MPS}}(\ell \mid \vec{x})
    % \;=\;
    % \frac{\exp\bigl[f_\ell(\vec{x})\bigr]}{\sum_{l'} \exp\bigl[f_{l'}(\vec{x})\bigr]},
    % \]
    % where $f_\ell(\vec{x})$ is the function computed as in~\cref{eq:mps_classifier_function}.
    % Define the MPS-classifier's output probabilities via the softmax function and 
    % let $\mathcal{L}_{\text{MPS}}(D_{fq})$ denote the negative log-likelihood loss,
    Let $\mathcal{L}_{\text{MPS}}(D_{fq})$ denote the loss function calculated by 
    applying softmax function to the output of the MPS-classifiers followed by the negative log-likelihood loss,
    where $D_{fq}$ is the first-qubit trigger dataset.
    Then, the norm of the gradient of $\mathcal{L}_{\text{MPS}}(D_{fq})$ with respect to tensors $A^{s_k}_{i_k, i_{k+1}}$ does not scale with $L$.
    \begin{equation}\label{eq:mps-grad-bound}
    \Bigl\lVert 
    \frac{\partial \,\mathcal{L}_{\text{MPS}}(D_{fq})}{\partial \,A^{s_k}_{i_k, i_{k+1}}}
    \Bigr\rVert
    = O(1).
    \end{equation}
    Consequently, the MPS-classifiers \emph{do not} experience barren plateaus 
    when trained on the first-qubit trigger dataset.
\end{theorem}
While the absence of barren plateaus means gradients do not vanish, it does not strictly guarantee 
that the MPS-classifiers will locate globally optimal parameters; a local minimum could still occur. 
In practice, however, we find empirically (see Supplemental Material) that gradient-based training 
of MPS-classifiers with stacked identity initialization \emph{does} converge to perfect classifiers
on the first-qubit trigger dataset. 
Nonetheless, even if we have the classically trained MPS-classifiers solution, embedding them into quantum circuits
requires a postselection whose success rate falls off \emph{exponentially} with $L$.

\begin{theorem}[\textbf{Exponential postselection Cost for Embedding MPS-Classifiers into qMPS-Classifiers}]\label{thm:post_sel} 
Consider MPS-classifiers that exactly 
classifies the first-qubit trigger dataset and 
we perfectly embed the classifiers into qMPS-classifiers with postselection.
Then, for any input $\vec{x} = (x_1, x_2, \dots, x_L) \in {\{0, 1\}}^L$, the probability $P_{\text{post-sel}}$ where
all intermediate $(L-1)$ qubits are measured in the $\ket{0}$ state, is exponentially small in $L$. 
Concretely, 
\begin{equation} 
P_{\text{post-sel}} = 2^{-(L-1)}.
\end{equation}
\end{theorem}
These theorems show that for the first-qubit trigger dataset, 
although there exist the optimal parameters for qMPS-classifiers,
we've faced two distinct exponential barriers when we try to obtain the qMPS-circuit classifiers with gradient based optimization: 
(i)~attempting to \emph{train} qMPS-classifiers from scratch via gradient descent is obstructed 
by barren plateaus, and (ii)~even though MPS-classifiers itself do not experience barren plateaus,
attempting to use it as a QNN demands exponentially many trials due to postselection.
In the following sections, we show how both challenges can be overcome using an adiabatic 
encoding framework.

\textit{Framework.---}
\begin{figure}[tbp]
	\centering
	\includegraphics[width=0.48\textwidth]{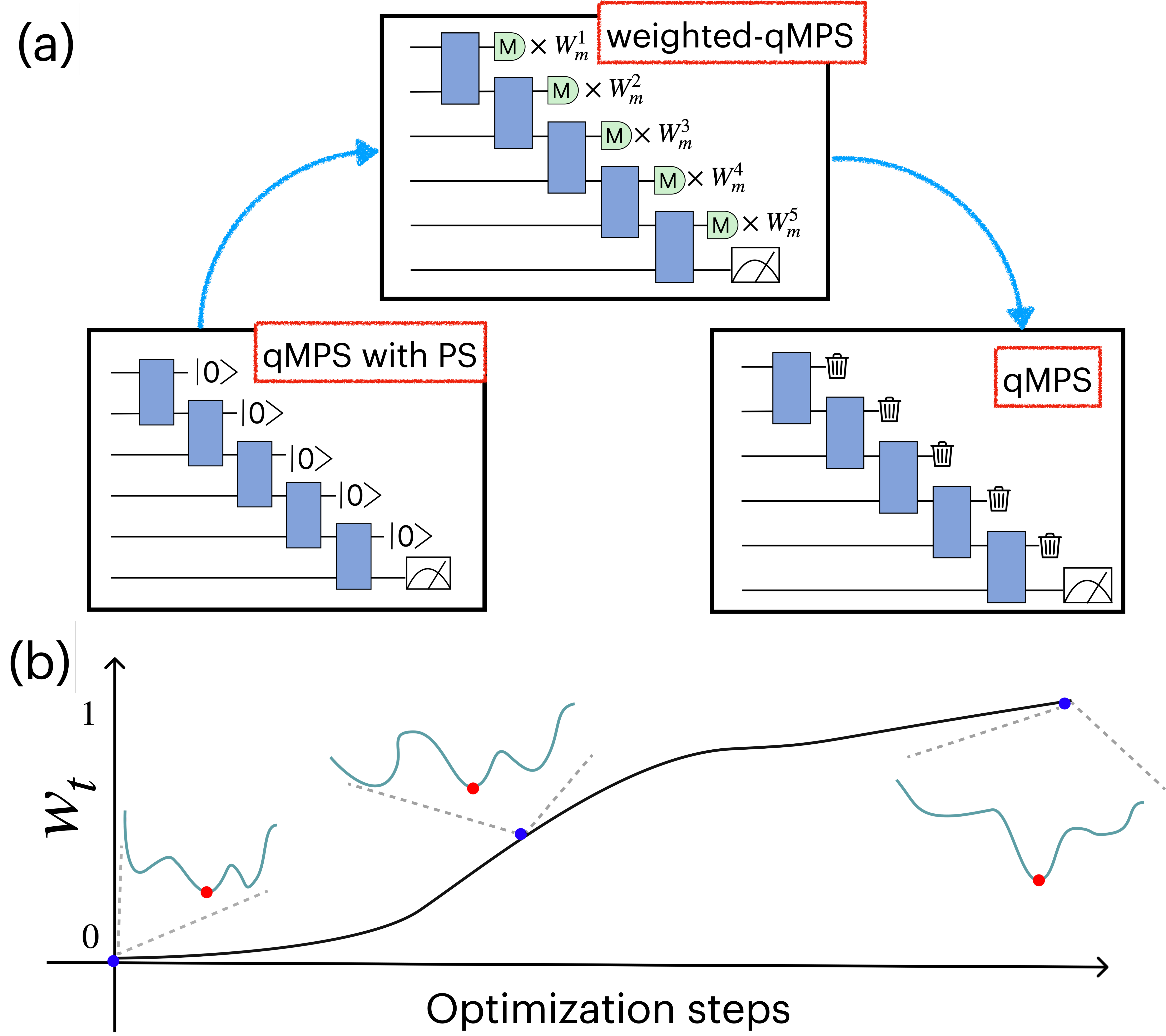}
    \caption{(a) Illustration of a qMPS-classifier with postselection (left), 
    a qMPS-classifier without postselection (right), and a weighted qMPS ansatz 
    (middle). The green boxes labeled $M$ represent measurements on each qubit.
    (b) Schematic of how the loss  landscape changes as $W_1$ is gradually increased from 0 to 1.}\label{fig:ae}
\end{figure}
Having observed that a naive embedding of the MPS-classifiers into quantum circuits
inevitably requires postselection, we now introduce a framework that circumvents 
the severe overhead. 
A key element of our construction is the notion of \emph{weighted quantum states} first introduced in~\cite{Holmes2023}. 
In its most general form, we can view weighted quantum states as circuits whose measurement 
outcomes are assigned classical weights before combining them into a single effective state. 
Concretely, consider a general quantum instrument $\{\mathcal{E}_{m}\}$ acting on an input 
density matrix $\rho_{\text{in}}$, with corresponding outcomes labeled by $m$, 
then the \emph{weighted quantum states} is defined as
\begin{equation}
	\tilde{\tau}
	= 
	\sum_{m} W_m \, p_m \, \frac{\mathcal{E}_m\bigl(\rho_{\text{in}}\bigr)}{\mathrm{Tr}\!\Bigl[\mathcal{E}_m\bigl(\rho_{\text{in}}\bigr)\Bigr]}
	=
	\sum_{m} W_m \,\mathcal{E}_m\bigl(\rho_{\text{in}}\bigr),
\end{equation}
where $p_m = \mathrm{Tr}[\mathcal{E}_m\bigl(\rho_{\text{in}}\bigr)]$ 
is the probability of obtaining outcome $m$, 
and $W_m$ is the weight assigned to that outcome. 
Notice that in general $\mathrm{Tr}[\tilde{\tau}] \neq 1$, 
so $\tilde{\tau}$ is not a valid density operator. 
Nevertheless, when used in classifiers, 
the overall scalar factor does not affect the classification result.
Applying this concept to the qMPS-classifiers leads to an ansatz illustrated in~\cref{fig:ae} (a)
which we call \emph{weighted qMPS}.
In the weighted qMPS, the final measurement outcome is obtained by assigning a weight $W_m^{j}$ 
depending on the measurement outcome $m \in \{0, 1\}$ of the $j$-th qubit.
Despite the significant differences between qMPS-classifiers with postselection and qMPS-classifiers without postselection, 
this weighted qMPS interpolates between these two extremes.
For \emph{qMPS-classifiers with postselection} set $W^j_m = (1, 0)$ 
so that the circuits keep running only when the measurement outcome is $0$, 
and for \emph{qMPS-classifiers without postselection} set $W^j_m = (1, 1)$ to realize partial trace.

To achieve a smooth and gradual transition from 
$W_m^{j} = (1, 0)$ to $W_m^{j} = (1, 1)$, 
it is necessary to determine an appropriate scheduling for $w$ in $W_m = (1, w)$,
where we assume to update $W_m^{j}$ uniformly for all $j$. 
Specifically, given a total optimization iteration count $T$, we predefine a schedule $w_t$ for $t = 0, 1, \dots, T$, 
so that $w_t$ monotonically increases from $w_0 = 0$ to $w_T = 1$.
~\cref{fig:ae} (b) shows the concept of this process.
The loss function landscape also deforms gradually with the schedule $w_t$.
At each $w_t$, we optimize the unitary gates until the loss function becomes small enough.
This scheduling process involves engineering considerations, 
as we have observed that different datasets require different pacing. 
% For some datasets, it is beneficial to maintain a slow transition at the early stages, while for others, a slower transition in the later stages is crucial. 
Additionally, we have found that introducing a regularization term 
$
\mathcal{R}(W) = -\log \mathrm{Tr}[\tilde{\tau}],
$
which can be interpreted as a measure of success rate $P_{\text{success}}$, 
further stabilizes the training process. 
% Crucially, this approach prevents performance degradation, 
% as it starts from a classically well-optimized MPS state and only gradually relaxes the postselection constraint.

\textit{Numerical Experiments.---}
Before demonstrating how our adiabatic encoding framework resolves the barren plateaus on two representative datasets, 
we briefly outline our implementation settings.
For the MPS-classifiers, we used the Adam optimizer.
For the qMPS-classifiers, whose unitary gates must be updated, we employed a ``Riemannian Adaptive Optimization Method'', 
which effectively implements an Adam-like algorithm on Riemannian manifolds~\cite{Becigneul2018}. 
Both classifiers utilized the same feature map $\phi(x)\propto (x,\,1-x)$.
Further implementation details are provided in the Supplemental Material.

\begin{figure}[h]
	\centering
	\includegraphics[width=0.48\textwidth]{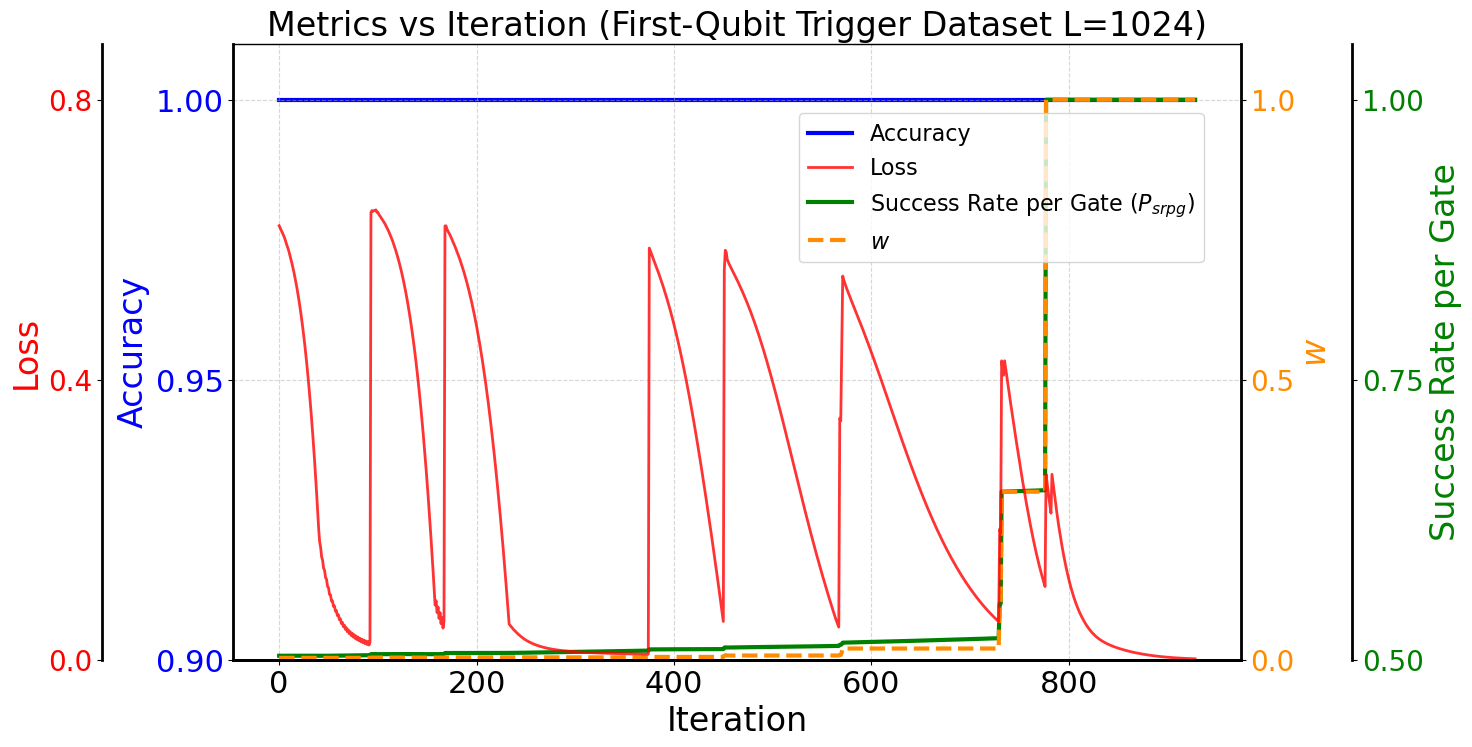}
	\caption{Adiabatic encoding framework for the first-qubit trigger dataset.
	Here, ``success rate per gate'' denoted by $P_{\text{srpg}}$ is the average success probability at a each two-qubit gate,
    which can be calculated from $\mathrm{Tr}[\tilde{\tau}]$ at each $k$-th gate.
	The overall circuit's success probability $P_{\text{success}}$ can be calculated by $P_{\text{success}} = P_{\text{srpg}}^{{L-1}}$.
    }
	\label{fig:ae_first_qubit_trigger}
\end{figure}

We now demonstrate these methods on the first-qubit trigger dataset, 
with $L = 1024$.
Directly training a qMPS-classifier on this dataset 
is unfeasible due to the severe barren plateaus.
By contrast, we can 
prepare a qMPS-classifier with postselection that perfectly classifies the dataset by exactly embedding a 
pre-trained MPS-classifier 
then use adiabatic encoding to remove postselection.
Throughout this procedure, the classification accuracy remains at 100\%, as shown in 
\cref{fig:ae_first_qubit_trigger}. We train on mini-batches of $2^{12}$ examples 
per iteration. 
In every update of $w$, the loss function increases discontinuously but we train the circuit until the loss function 
becomes small enough.
Empirically, we find that, from the initial $w_1 = 0.002$, the $w$ must be 
increased very gradually in the early phase of adiabatic encoding to 
maintain stability. However we also observed that once $w$ reaches approximately 0.02, faster 
increments do not degrade performance, thus accelerating the scheduling pace.
% You can find bad scheduling examples in the Supplemental Material.

\onecolumngrid 
\begin{center}
  \includegraphics[width=0.95\textwidth]{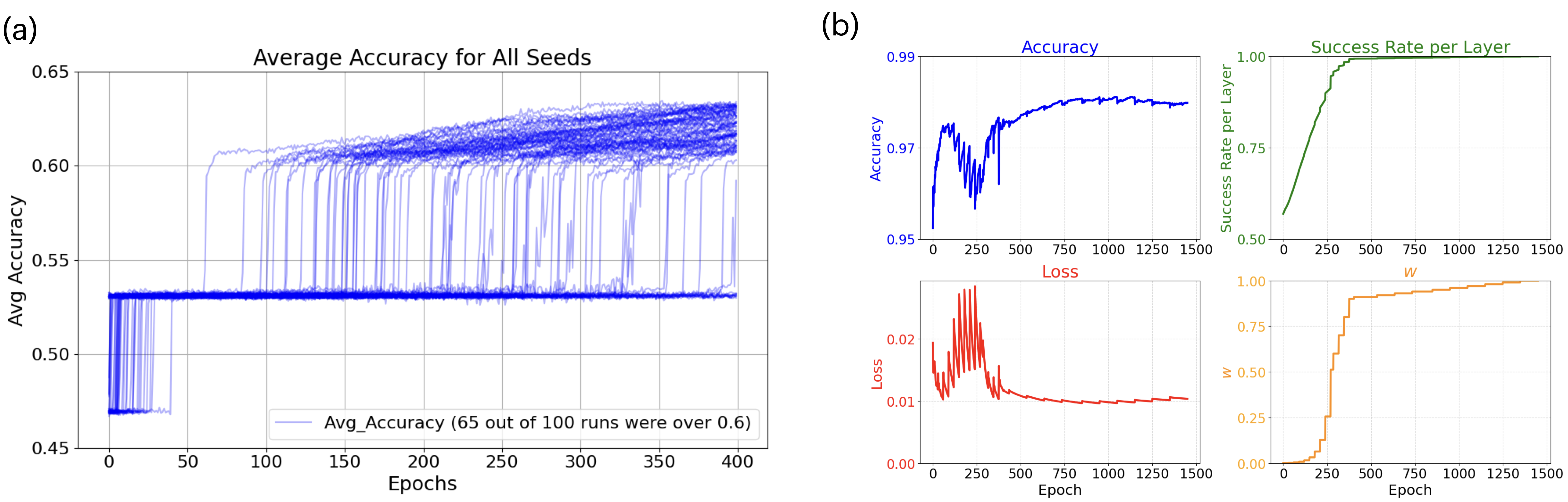}
  \captionof{figure}{(a) Direct training of the qMPS-classifier suffers from barren plateaus and local minima. (b) Adiabatic encoding successfully circumvents these issues. 
  In the same manner, $P_{\text{success}}$ can be calculated by $P_{\text{success}} = P_{\text{srpg}}^{{256 - 1}}$.
  Both (a) and (b) are for training dataset.}\label{fig:ae_binary_mnist}
\end{center}
\twocolumngrid
To assess the performance on a more realistic dataset, we next consider the binary MNIST dataset of digits 3 and 5, 
which we have found to be the most difficult case to distinguish with a qMPS-classifier.
We can attribute this difficulty to the fact that 3 and 5 are very similar from middle to bottom.
% Unlike the artificial dataset, the optimal parameters for qMPS-classifier are obviously  unknown in this case.
The dataset contains 11552 training samples, where each sample is a $16 \times 16$ pixel image reshaped into a 256-dimensional vector.
\cref{fig:ae_binary_mnist} (a) shows that training the qMPS-classifier directly suffers from barren plateaus and local minima.
We've conducted 100 independent trials with different random seeds but all of runs failed due to either barren plateaus or local minima:
only 35\% of runs reached above 60\% accuracy and none exceeded $70\%$ accuracy.
However, by pre-training the MPS-classifier for 100 epochs and applying adiabatic encoding, 
we successfully circumvent these issues. 
The results in~\cref{fig:ae_binary_mnist} (b) indicate that our method can convert the pre-trained 
MPS-classifier which requires postselection into a qMPS-classifier without postselection
while maintaining high accuracy.
Similar to the previous case, we observe temporary spikes in the loss at each update from $w_t$ to $w_{t+1}$ but we train the circuit until the loss function 
becomes small enough to proceed to the next $w$.
You can suppress the spikes by employing a more gradual scheduling strategy but it is not necessary for the final performance.
% Meanwhile, the exponentially small postselection success rate steadily improves and converges to 1.
The reason for the slight performance improvement is believed to be that while the MPS-classifier was only trained for 100 epochs, the adiabatic encoding process involved exposure to the data for over 1400 epochs.

\textit{Conclusion.---}
In this work, we have introduced an adiabatic encoding framework 
that converts classically trained MPS-classifiers into qMPS-classifiers without degrading the performance.
To demonstrate the effectiveness of our method, we applied it to two datasets: 
the first-qubit trigger dataset and the binary MNIST dataset.
The first-qubit trigger dataset is an artificial dataset such that the training of qMPS-classifiers is mathematically proven to be exponentially hard due to barren plateaus.
We also showed that while training of MPS-classifiers on this dataset itself is not affected by barren plateaus,
the embedding them into quantum circuits requires exponentially many trials for postselection.
The binary MNIST dataset is a real dataset that is challenging to directly train qMPS-classifiers because of the barren plateaus and local minima.
Despite both datasets posing significant training challenges, we have shown that our adiabatic encoding framework ensures that both can be trained successfully.

We mainly focused on the adiabatic encoding of the MPS-classifiers with $\chi=2$ but applying it to other tensor network
structures and larger $\chi$ is also possible.
Overall, our framework provides a scalable blueprint for 
exploiting the synergy between tensor-network and quantum neural networks
and helps accelerate progress toward genuine quantum advantage in machine learning.

\begin{acknowledgments}
The authors would like to thank Tsuyoshi Okubo, 
Synge Todo, Takumi Kobori
and Shota Sugawara for discussions and comments.
This work was supported by the Center of Innovation for Sustainable Quantum AI, JST Grant Number JPMJPF2221, and JSPS KAKENHI
Grants No. JP24KJ0892.
\end{acknowledgments}

\bibliographystyle{apsrev4-2} % APS style
\bibliography{main}

%   This file is part of the APS files in the REVTeX 4.2 distribution.
%   Version 4.2a of REVTeX, December 2014
%
%   Copyright (c) 2014 The American Physical Society.
%
%   See the REVTeX 4 README file for restrictions and more information.
%
% \documentclass[aps,prl,reprint,superscriptaddress,floatfix,nofootinbib,amsmath,amssymb]{revtex4-2}

% \usepackage{graphicx}% Include figure files
% \usepackage{dcolumn}% Align table columns on the decimal point
% \usepackage{bm}% bold math
% \usepackage{hyperref}% add hypertext capabilities
% \usepackage[capitalize]{cleveref}
% \usepackage{amsthm} % theorems (might not be typical in PRL, but you can keep if needed)
% \usepackage{amsmath} % additional math
% \usepackage{amssymb} % additional math
% \usepackage{physics} % for \ket, \bra, etc.
% \usepackage{enumitem}
% \usepackage{textcomp}
% \usepackage{float}
% \usepackage{capt-of}  % for \captionof if needed
% \usepackage[caption=false]{subfig} % subfigures without interfering with caption style

% % Theorem-like environments (these are not typical in PRL, but you can keep them if strongly needed)
% \newtheorem{theorem}{Theorem}
% \newtheorem{corollary}{Corollary}
% \theoremstyle{definition}
% \newtheorem{lemma}{Lemma}
% \newtheorem{prop}{Proposition}
% \newtheorem{remark}{Remark}
% \newtheorem{spec}{Speculation}
% \newtheorem{example}{Example}
% \newtheorem{definition}{Definition}
% \newtheorem{proposition}{Proposition}

% \begin{document} 
\onecolumngrid

\section*{Supplemental Material for "Adiabatic Encoding of Pre-trained MPS Classifiers into Quantum Circuits"}

\subsection{Implementation Details}

\subsubsection{Classical MPS-Classifiers}\label{sssec:supp_mps_impl}
We train the MPS-classifiers by the Adam optimizer. 
The classification output is obtained by applying a softmax to $f_\ell(\vec{x})$ as 
\begin{equation*}
  \label{eq:softmax}
  \sigma(\vec{x})_{\ell} = \frac{e^{f_\ell(\vec{x})}}{\sum_{\ell'} e^{f_{\ell'}(\vec{x})}}, \tag{S1}
\end{equation*}
and the loss function is the negative log-likelihood (NLL). 
\begin{align*}
  \label{eq:nll}
  \mathcal{L}(\mathcal{D}) &= \frac{1}{N_{\mathcal{D}}}\sum_{(\vec{x}_i, \ell_i) \in \mathcal{D}} \mathcal{L}(\vec{x}_i, \ell_i) \\
  &= -\frac{1}{N_{\mathcal{D}}}\sum_{(\vec{x}_i, \ell_i) \in \mathcal{D}} \log(\sigma(\vec{x}_i)_{\ell_i}). \tag{S2}
\end{align*}
We initialize the MPS tensors with “stacked identity plus noise”.
More specifically, 3-rank tensors $A^{s_k}_{i_k, i_{k+1}}$
are initialized as 
\begin{equation*}\label{eq:mps_init}
  A^{s_k}_{i_k, i_{k+1}} = \delta_{i_k, i_{k+1}} + \mathcal{N}^{s_k}_{i_k, i_{k+1}}, \tag{S3}
\end{equation*}
where $\mathcal{N}^{s_k}_{i_k, i_{k+1}}$ is a random variable drawn from the normal distribution with mean 0 and variance $\epsilon^2$.
We select $\epsilon$ so that $\sqrt{L}\,\epsilon \ll 1$.
In the numerical experiments, we set $\epsilon = 0.1/\sqrt{L}$.
For the leftmost matrix, any initialization that does not depend on $s_{1}$ is permissible in principle. 
Nonetheless, a common convention is to set $A_{i_1}^{s_1} = 1 + N_{i_1}^{s_1}$. 
One may view this as treating the first tensor 
as a rank-3 array with a common initialization $A_{i_0, i_1}^{s_1} = \delta_{i_0,i_1} + N_{i_0,i_1}^{s_1}$, 
followed by imposing the boundary condition $\langle + \vert$ on the left side.
The core idea of this 
“stacked identity” initialization is that each MPS layer produces 
an output that is not biased by the input data.
Concretely, if the $k$-th tensors were 
purely identity (i.e., without random noise), then for all input states the MPS 
output vector would simply be ${(1,\,1)}^T$ at every layer, which predicts 50\% for both classes. 
Introducing random perturbations around 
this identity baseline ensures that the model's outputs are slightly different 
for different input states, which is crucial for the model to learn.

\subsubsection{qMPS-Classifiers via Adiabatic Encoding}
\label{sec:supp_qmps_impl}
As explained in the main text, we perform an adiabatical shift from weights 
$W_m^{j}=(1,0)$ (postselection) to $W_m^{j}=(1,1)$ (no postselection). 
Generally, we can set independet scheduling for each weight $W_m^{j}$.
However, in this work, we use uniform scheduling for all weights $W_m^{j} = W_m = (1, w)$.
A schedule for $w_t$ for $t=0,1,\ldots,T$ is chosen so that 
we discretize the interval $[0,1]$ in $T$ steps and gradually increase $w_t$ during training, 
re-optimizing parameters at each step. 
As discussed in the main text, the scheduling process involves engineering considerations,
and often require a dataset-dependent tuning by the user.
During the adiabatic encoding, we train the parameters until the loss converges at each step of $w_t$.
The optimization is done with Riemannian Adaptive Optimization Method~\cite{Becigneul2018} 
as we need to optimize parameters on unitary group ($U(4)$), a type of Riemannian manifold.
Unlike the MPS-classifiers whose output $f_\ell(\vec{x})$ is not normalized,
the qMPS-classifiers naturally returns probability distribution over the classes
and we can directly use the output for NLL to calculate the loss.
Also, we have observed that in some cases, replacing the NLL with focal loss~\cite{Lin2018}
can accelerate the scheduling pace for $w_t$, and we indeed use focal loss for the MNIST dataset.

In the main text, we have used $\Tr[\tilde{\tau}]$ as a regularization term,
which can be interpreted as the success probability of postselection.
In more detail, consider the measurement on the $j$-th qubit, which can yield outcome $m\in\{0,1\}$. 
One may view the weighting $W_m^j$ as multiplying the final output by $W_m^j$, 
yet for many classification tasks we only need to compare the relative scale of the outputs. 
This means the scailing factor does not affect the classification results.
This observation enables a “pseudo-postselection” scheme, wherein you can probabilistically conduct postselection.

Concretely, suppose we set $W_m = (1, w)$ (with $0 \le w \le 1$) uniformly for all qubits. 
Then for an input state $\rho$,
\begin{equation*}\label{eq:quasi_ps}
    \tilde{\tau} \;=\; \mathcal{E}_0(\rho) \;+\; w\,\mathcal{E}_1(\rho), \tag{S4}
\end{equation*}
where $\mathcal{E}_0$ (resp.\ $\mathcal{E}_1$) denotes the completely positive (CP) map obtained when the measurement outcome is $0$ (resp.\ $1$). 
Notice that if $w=1$, the overall map becomes trace-preserving (no genuine postselection), 
while for $w=0$, the quantity $\mathrm{Tr}[\tilde{\tau}]$ equals 
$\mathrm{Tr}[\mathcal{E}_0(\rho)]$, precisely the success probability of standard postselection onto outcome $m=0$. 
Hence for intermediate values of $w$, \(\tilde{\tau}\) can be interpreted as continuing the circuits with probability $w$ even after measuring outcome $m=1$, 
making $\mathrm{Tr}[\tilde{\tau}]$ a success probability for quasi-postselection procedure. 
In this sense,~\cref{eq:quasi_ps} gives a unified way to see how $\mathrm{Tr}[\tilde{\tau}]$ tracks 
the effective fraction of “kept” measurement outcomes as $w$ is varied between postselection ($0$) and trace-preserving ($1$).

\subsubsection{Feature Map}
\label{sec:supp_featmap}
In both the classical MPS and qMPS-classifiers, we use the same feature map 
\(\phi(x)\propto (x,1-x)\) and each input element $x$ is preprocessed to lie in $[0,1]$ whenever necessary. 
In MPS-classifiers, the resulting vectors are normalized so that $\sum_i \phi_i(x)=1$.
This normalization is essential for the MPS-classifiers to avoid the explosion or vanishing of the norm.
In qMPS-classifiers, the feature map is applied to each qubit, and the resulting vectors are normalized so that $\sum_i ||\phi_i(x)||^2 = 1$,
which is a fundamental requirement for the input state to be a valid quantum state.
This scailing does not affect the classification results as only relative size relation is important.
We can also use other feature maps, e.g., $\psi(x) \propto (\cos(x), \sin(x))$,
which is also a well-used feature map for TN-classifiers and qTN-classifiers.

In the following, we provide detailed proofs of the theorems regarding the trainability of qMPS-classifiers and MPS-classifiers for the First-Qubit Trigger dataset.

\subsection{Proof of Theorem 1: Exponential Indistinguishability of qMPS-Classifiers}

\noindent \textbf{First and Second Moments of Haar-Random Unitaries.} 

We begin by recalling the first and second moments of Haar-random unitaries, 
which are essential to the proof of Theorem~1. 
For a unitary matrix $U$ in $U(N)$ with Haar measure $dU_H$, 
where $N$ is the dimension of the unitary matrix, 
the first moment $M_1(dU_H)$ and second moment $M_2(dU_H)$ satisfy the following integrals: 
\begin{align}
    M_{1}(dU_H) &= \int_{U(N)} dU_H \, U_{l_0,r_0} \, \overline{U}_{l_0',r_0'} = \frac{1}{N} \delta_{l_0,l_0'} \delta_{r_0,r_0'}, \tag{S5} \\[6pt]
    M_{2}(dU_H) &= \int_{U(N)} dU_H \, U_{l_0,r_0} \, U_{l_1,r_1} \, \overline{U}_{l_0',r_0'} \, \overline{U}_{l_1',r_1'} \nonumber \\
    &\quad= \frac{1}{N^2 - 1} \Bigl( \delta_{l_0,l_0'} \delta_{l_1,l_1'} \delta_{r_0,r_0'} \delta_{r_1,r_1'} + \delta_{l_0,l_1'} \delta_{l_1,l_0'} \delta_{r_0,r_1'} \delta_{r_1,r_0'} \Bigr) \tag{S6} \\[6pt]
    &\quad\quad - \frac{1}{N(N^2-1)} \Bigl( \delta_{l_0,l_0'} \delta_{r_0,r_0'} \delta_{l_1,l_1'} \delta_{r_1,r_1'} + \delta_{l_0,l_1'} \delta_{r_0,r_1'} \delta_{l_1,l_0'} \delta_{r_1,r_0'} \Bigr), \tag{S7}
\end{align}
where $\delta_{a,b}$ is the Kronecker delta and 
$\overline{U}$ denotes the complex conjugate of each element of $U$. 
As discussed in Ref.~\cite{Liu2022}, the above properties can also be represented 
using tensor network diagrams, which provide a convenient visual and 
computational framework for handling such integrals over the unitary group

\begin{equation}
  \label{eq:m1_tn_o}
  \begin{array}{c}
  \includegraphics[width=0.3\textwidth]{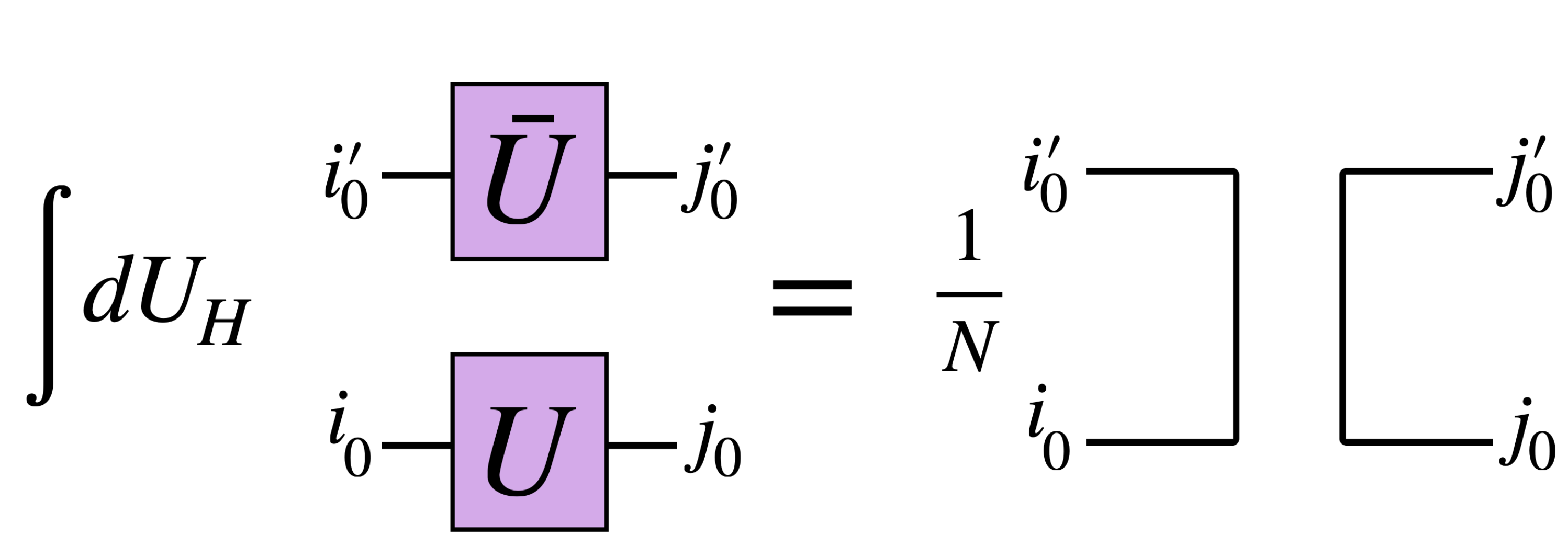}\hspace{3pt}, \tag{S8} \\[6pt]
  \end{array}
\end{equation}

\begin{equation}
  \label{eq:m2_tn_o}
  \begin{array}{c}
  \includegraphics[width=0.95\textwidth]{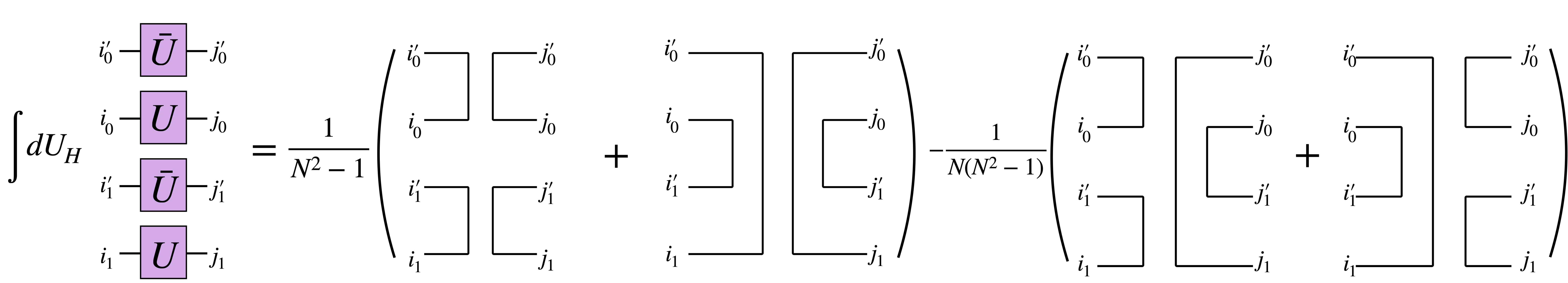}\hspace{3pt}. \tag{S9} \\[6pt]
  \end{array}
\end{equation}

In the case of two-qubit unitary gates, this tensor network diagram can also be
expressed as

\begin{equation}
  \label{eq:m1_tn}
  \begin{array}{c}
  \includegraphics[width=0.35\textwidth]{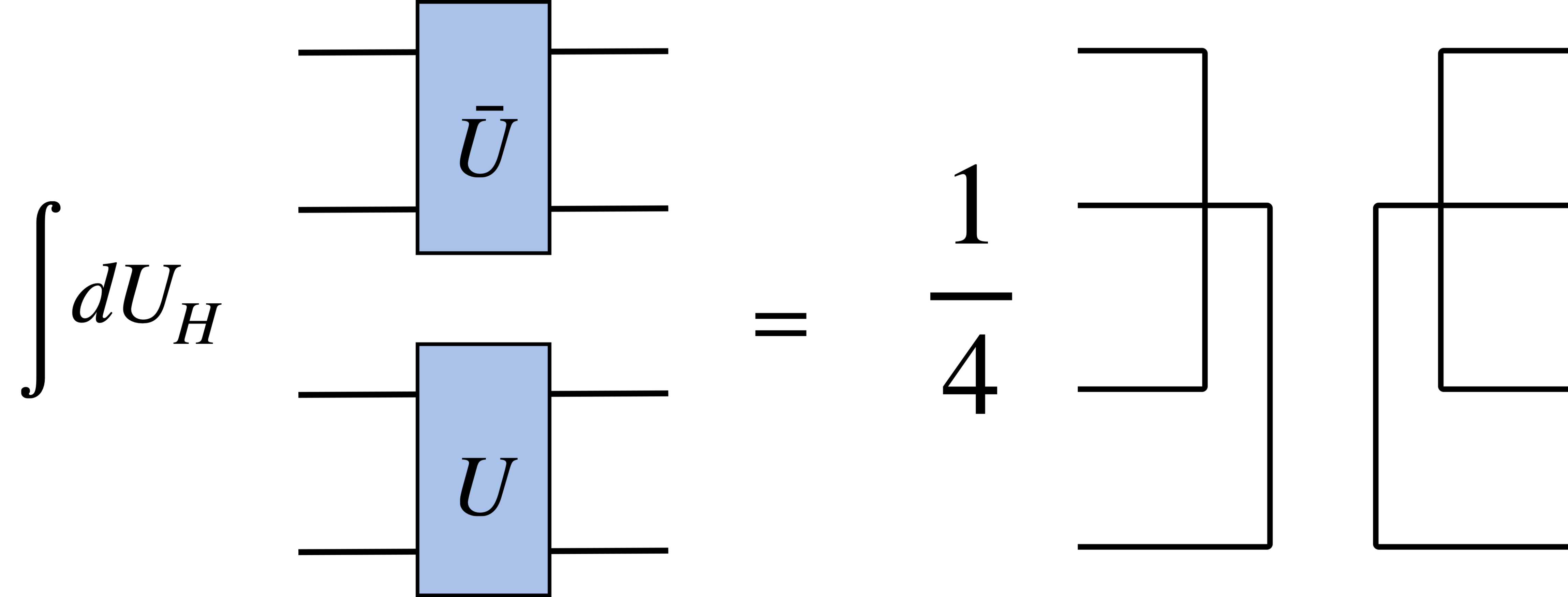}\hspace{3pt}, \tag{S10} \\[6pt]
  \end{array}
\end{equation}

\begin{equation}
  \label{eq:m2_tn}
  \begin{array}{c}
  \includegraphics[width=0.7\textwidth]{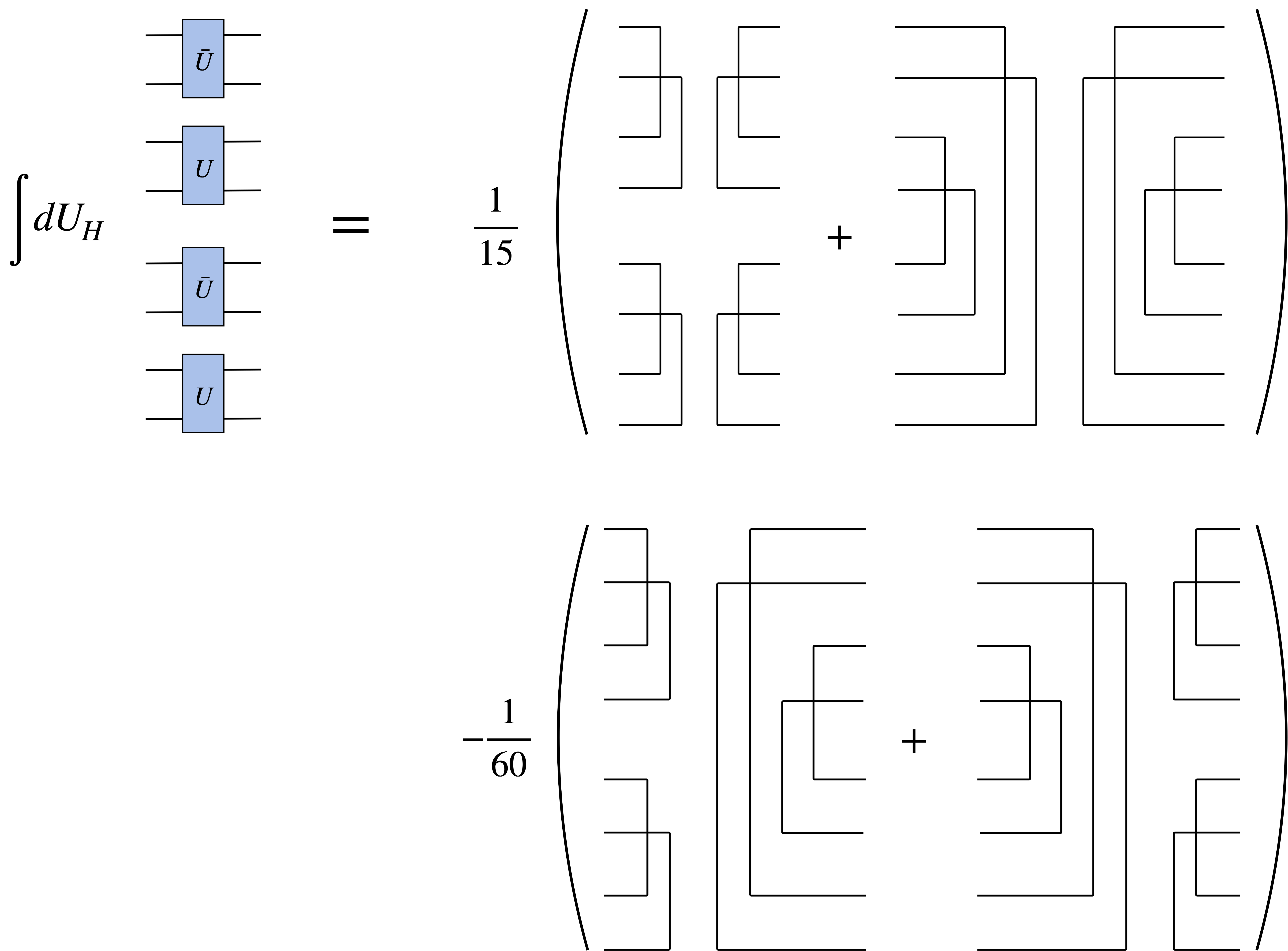}\hspace{3pt}. \tag{S11} \\[6pt]
  \end{array}
\end{equation}

Note that two-legs tensor each has 4 dimensions now expressed as four-legs tensor. each has 2 dimensions
We used $N=4$ since two-qubit unitary gates are in $U(4)$.

\noindent \textbf{Mean and the variance of $\Delta \rho$.}

From now on, we investigate the expectation 
$\mathbb{E}[\Delta \rho]$ and variance $\mathrm{Var}[\Delta \rho]$ under the Haar measure $dU_H$.
Without loss of generality, we set all $x_{2:l}$ to zero. 
Then, $\Delta \rho(x_{2:l})$ can be represented by the tensor network in~\cref{eq:delta_rho}.

\begin{equation}
  \label{eq:delta_rho}
  \begin{array}{c}
  \includegraphics[width=0.6\textwidth]{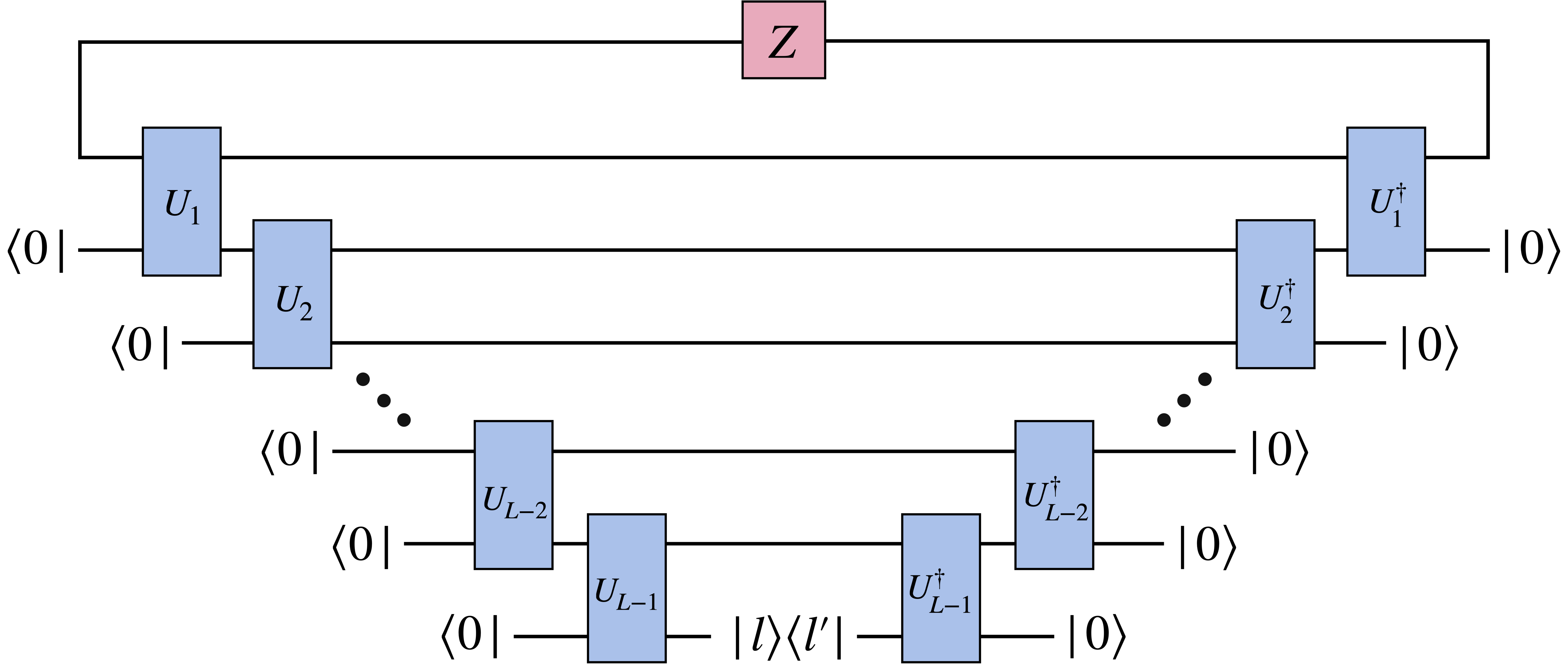}\hspace{3pt}, \tag{S12} \\[6pt]
  \end{array}
\end{equation}
By rearranging the network in~\cref{eq:delta_rho}, we obtain a one-dimensional layout.
\begin{equation}
  \label{eq:delta_rho_linear}
  \begin{array}{c}
  \includegraphics[width=0.5\textwidth]{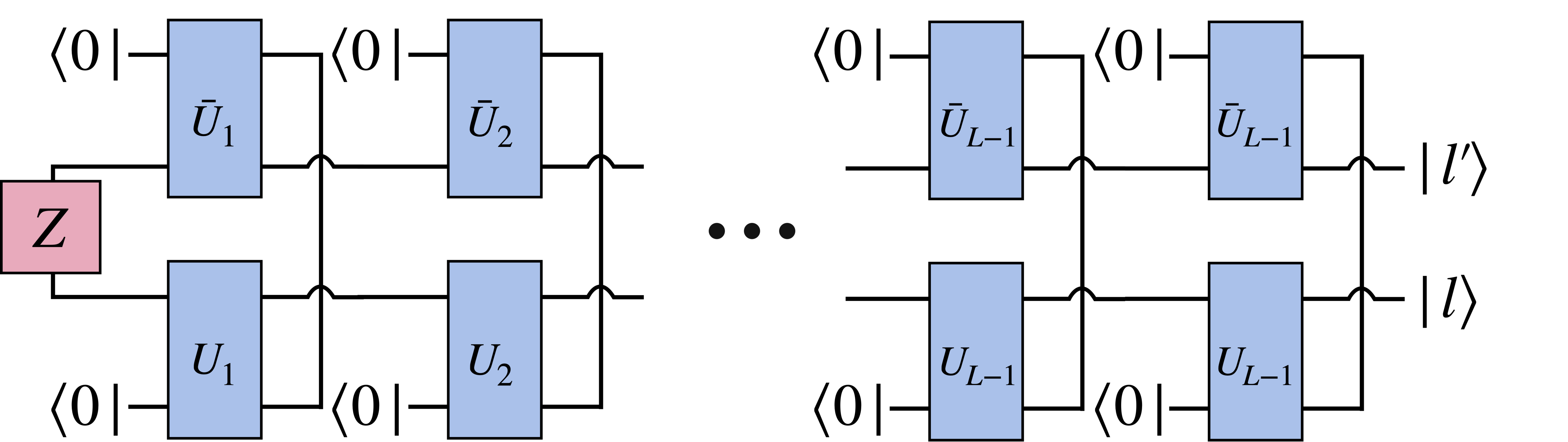}\hspace{3pt}, \tag{S13} \\[6pt]
  \end{array}
\end{equation}
Since we flip the four-leg tensor horizontally, $U^\dagger$ becomes $\bar{U}$.

\begin{lemma}\label{lemma:mean-deltarho}
  Expectation value of $\Delta \rho$ vanishes,
  \begin{equation}
    \mathbb{E}[\Delta \rho] = \mathbf{0}. \tag{S14}
  \end{equation}
  and the variance of $\Delta \rho$ decays exponentially as the system size $L$ increases,
  \begin{equation}
    \mathrm{Var}[\Delta \rho] = O((\frac{6}{15})^{L}). \tag{S15}
  \end{equation}
\end{lemma}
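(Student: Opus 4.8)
The plan is to compute $\mathbb{E}[\Delta\rho]$ and $\mathrm{Var}[\Delta\rho]$ directly from the one-dimensional tensor-network layout in~\cref{eq:delta_rho_linear} by averaging gate-by-gate over the Haar measure, using the first- and second-moment identities~\cref{eq:m1_tn,eq:m2_tn}. For the mean, note that $\Delta\rho$ is by construction a \emph{difference} of two quantum states (or unnormalized post-selected states) that differ only in the first-qubit input: since all the gates $U$ are drawn i.i.d.\ Haar, I would integrate out the unitary acting on the first qubit first. The first moment $M_1$ is proportional to $\delta_{l_0,l_0'}\delta_{r_0,r_0'}$, which produces an output that is completely insensitive to the input on that leg (it effectively replaces the input-dependent part by a maximally mixed contribution with an input-independent coefficient). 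Hence both branches of the difference collapse to the same object after this single integration, and $\mathbb{E}[\Delta\rho]=\mathbf{0}$. This step is short.

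For the variance, the quantity of interest is $\mathbb{E}[\|\Delta\rho\|^2]$ (or $\mathbb{E}[\Delta\rho\otimes\Delta\rho]$ contracted appropriately), which is a second-moment object: two copies of the circuit and two conjugate copies. The strategy is to fold the doubled network so that each Haar gate appears as the four-unitary combination $U\otimes U\otimes\bar U\otimes\bar U$, apply $M_2$ from~\cref{eq:m2_tn} at each site, and thereby obtain a transfer-matrix (matrix-product) structure: the Haar average turns each layer into a fixed linear map on the small space spanned by the two permutation-type tensors (identity and swap) appearing in $M_2$. Concretely, after averaging, the doubled network becomes a product over the $L$ sites of a $2\times 2$ (or small constant-size) transfer matrix $M$ acting on this permutation basis, sandwiched between a fixed boundary vector on the right (coming from the first-qubit input difference and the rest of the $x_{2:l}=0$ inputs) and a fixed covector on the left (coming from the output measurement / trace structure). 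The key arithmetic input is that, with $N=4$, the relevant eigenvalue of this transfer map — the one governing the surviving, non-cancelling contribution — has magnitude $6/15 = 2/5$; tracking the $-1/(N(N^2-1))$ cross terms and the $1/(N^2-1)$ terms through the contraction is exactly what produces the combinatorial fraction $6/15$. Raising this to the $L$-th power and bounding the boundary contractions by $O(1)$ constants gives $\mathrm{Var}[\Delta\rho]=O((6/15)^L)$.

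The key steps in order are: (i) rewrite $\Delta\rho$ as a difference of two branches that coincide everywhere except the first input leg, and perform the single $M_1$ integration over the first gate to get the mean; (ii) set up the doubled ($4$-copy) tensor network for $\mathbb{E}[\|\Delta\rho\|^2]$, using that independent gates average independently; (iii) apply $M_2$ at each site to collapse each layer into a constant linear transfer map on the permutation basis; (iv) diagonalize / bound this transfer map, identifying $6/15$ as the dominant surviving eigenvalue; (v) contract with the fixed boundary data (input $x_{2:l}=0$ and the output/trace structure) and collect the $O(1)$ prefactor, yielding the claimed exponential decay.

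The main obstacle I anticipate is step (iii)–(iv): carefully bookkeeping the contraction of the four-index permutation structure of $M_2$ through the particular geometry of the qMPS circuit — in particular making sure the input-difference boundary condition kills the "trivial" eigenvalue-$1$ (identity-on-all-copies) sector so that only the subleading sector with eigenvalue $6/15$ survives. One has to check that the leftover contributions from the $-1/(N(N^2-1))$ subtraction terms in $M_2$ do not reintroduce a slower-decaying mode, and that the boundary covector (the output structure after tracing) has nonzero overlap only with modes that decay at rate at most $6/15$. The rest — the $M_1$ computation for the mean, and the final $O(1)$ prefactor bound — is routine once the transfer-matrix picture is in place.
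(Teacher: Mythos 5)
Your proposal is correct and takes essentially the same route as the paper: the first-moment (depolarizing) average wipes out the traceless first-qubit difference, and the second-moment average reduces each site to a two-dimensional transfer recursion (the paper's $A_k,B_k$ with $A_{k+1}=A_k$ and $B_{k+1}=\tfrac{6}{15}(A_k+B_k)$), whose eigenvalue-$1$ mode is killed by the traceless boundary condition $A_0=0$, leaving the $(6/15)^L$ decay. The worry you flag about the $-1/(N(N^2-1))$ subtraction terms reintroducing a slow mode is exactly what the explicit recursion settles, so your plan coincides with the paper's argument.
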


\begin{proof}
First, we show $\mathbb{E}[\Delta \rho] = 0$. Using~\cref{eq:m1_tn} and~\cref{eq:delta_rho_linear}, 
we integrate the $k$-th unitary gate as shown in~\cref{eq:exp_delta_2}.
\begin{equation}
  \label{eq:exp_delta_2}
  \begin{array}{c}
  \includegraphics[width=0.65\textwidth]{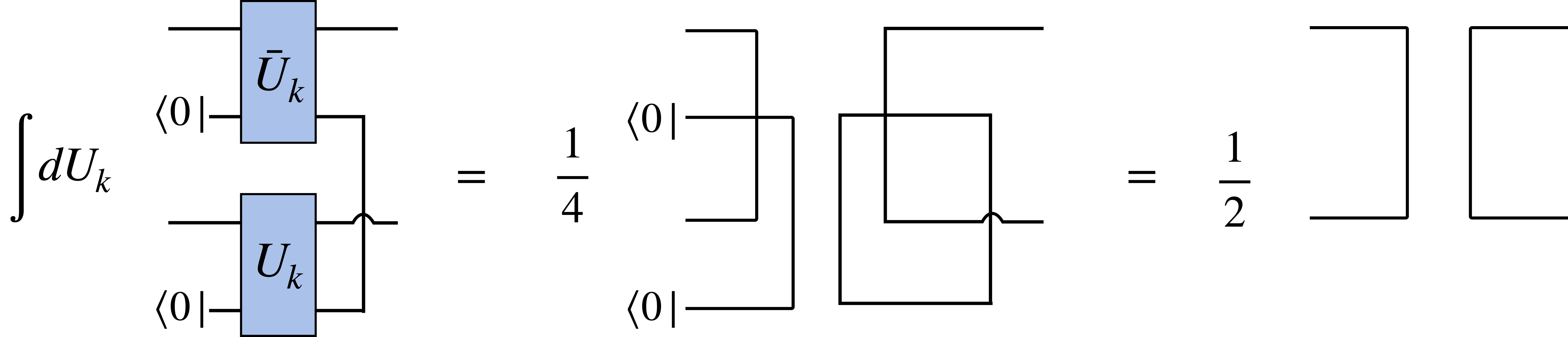}\hspace{3pt}, \tag{S16} \\[6pt]
  \end{array}
\end{equation}
Here we used the fact that the loop in the second term is~\cref{eq:exp_delta_2} simply becomes $2$.
By denoting the result of the integration up to the $(k-1)$-th unitary gate as $\mathrm{Q}_{k-1}$, we can formulate a recursive expression
\begin{equation}
  \label{eq:exp_delta_rec}
  % Q_k = \frac{1}{2} \text{Tr}[Q_{k-1}] \begin{pmatrix} 1 & 0 \\ 0 & 1 \end{pmatrix}, \tag{S15}
  \includegraphics[width=0.62\textwidth]{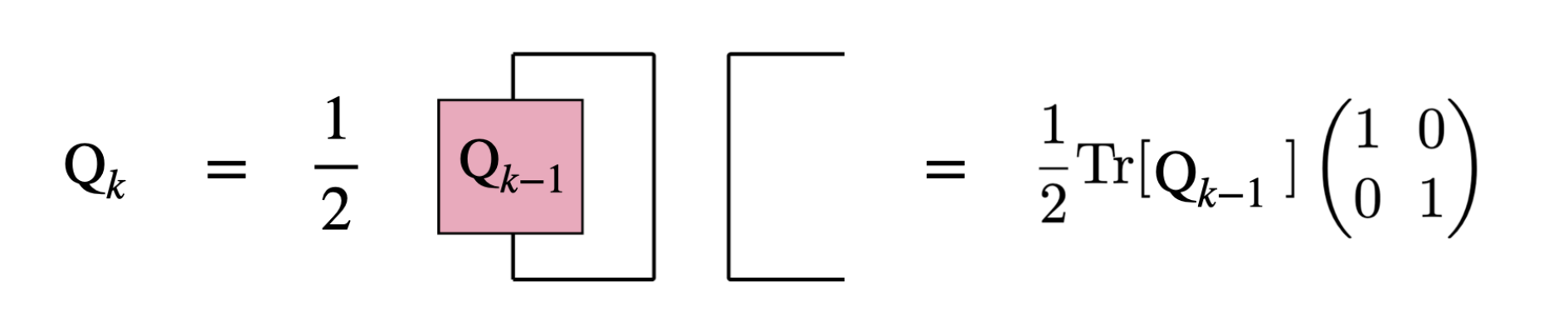}\hspace{3pt}. \tag{S17} \\[6pt]
\end{equation}
Here we expressed the open line with two legs as identity matrix. 
By noting that $\mathrm{Tr}[Z] = 0$, we conclude $\mathbb{E}[Q_L] = \mathbf{0}$.

Next, we show that the variance of $\Delta \rho$ decays exponentially as the system size $L$ increases.
By since $\mathbb{E}[\Delta \rho] = \mathbf{0}$, the variance of $\Delta \rho$ 
simply becomes $\mathbb{E}[{(\Delta \rho)}^{\odot 2}]$, 
where ${(\cdot)}^{\odot 2}$ element-wise square.
The tensor network diagram of $\mathbb{E}[{(\Delta \rho)}^{\odot 2}]$ is shown in~\cref{eq:exp_delta}.

\begin{equation}
  \label{eq:exp_delta}
  \begin{array}{c}
  \includegraphics[width=0.65\textwidth]{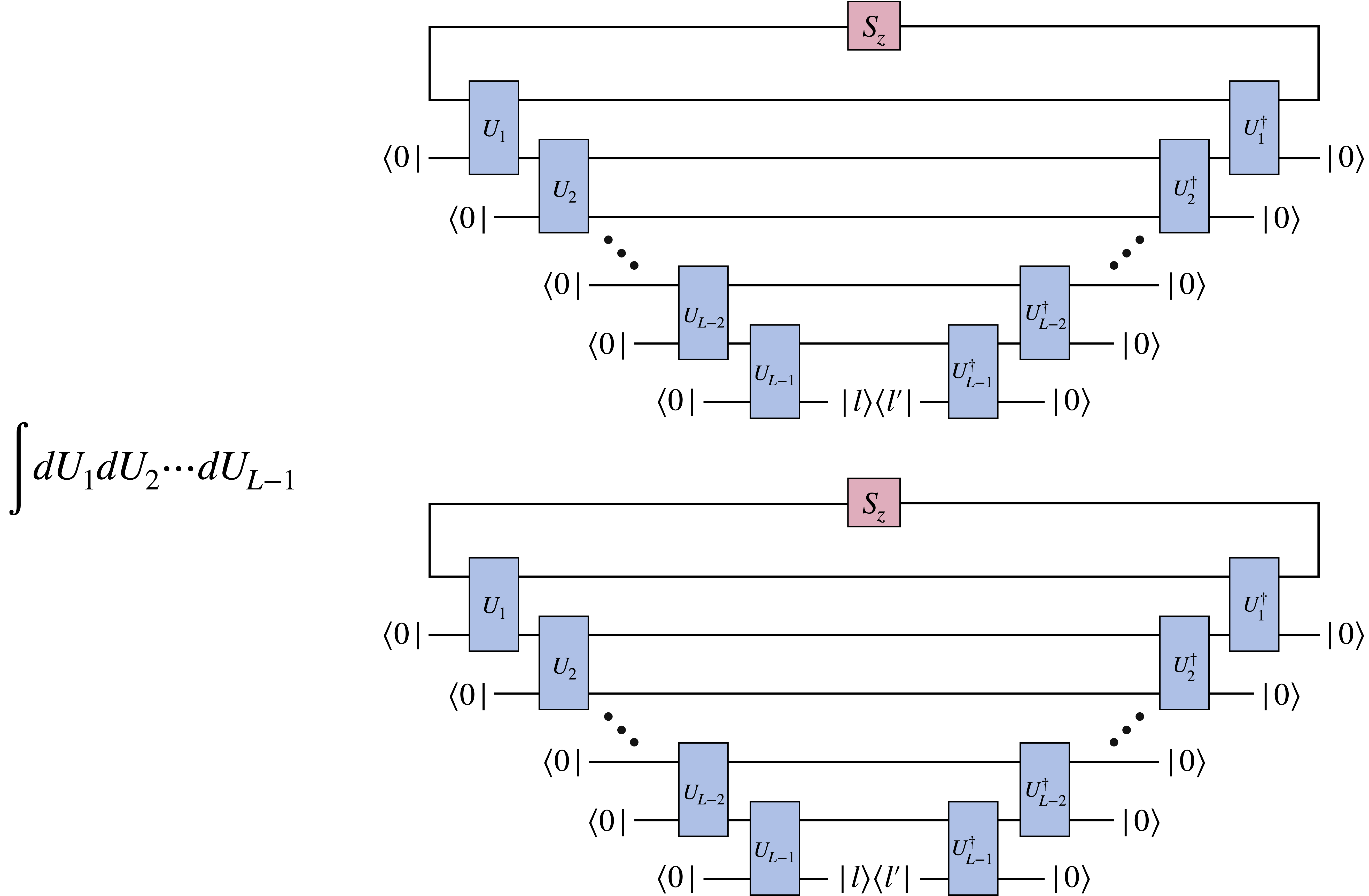}\hspace{3pt}, \tag{S18} \\[6pt]
  \end{array}
\end{equation}

Applying~\cref{eq:m2_tn} to~\cref{eq:exp_delta}, we can calculate the integral of the $k$-th unitary gate as in~\cref{eq:exp_delta_4}.
\begin{equation}
  \label{eq:exp_delta_4}
  \begin{array}{c}
  \includegraphics[width=0.9\textwidth]{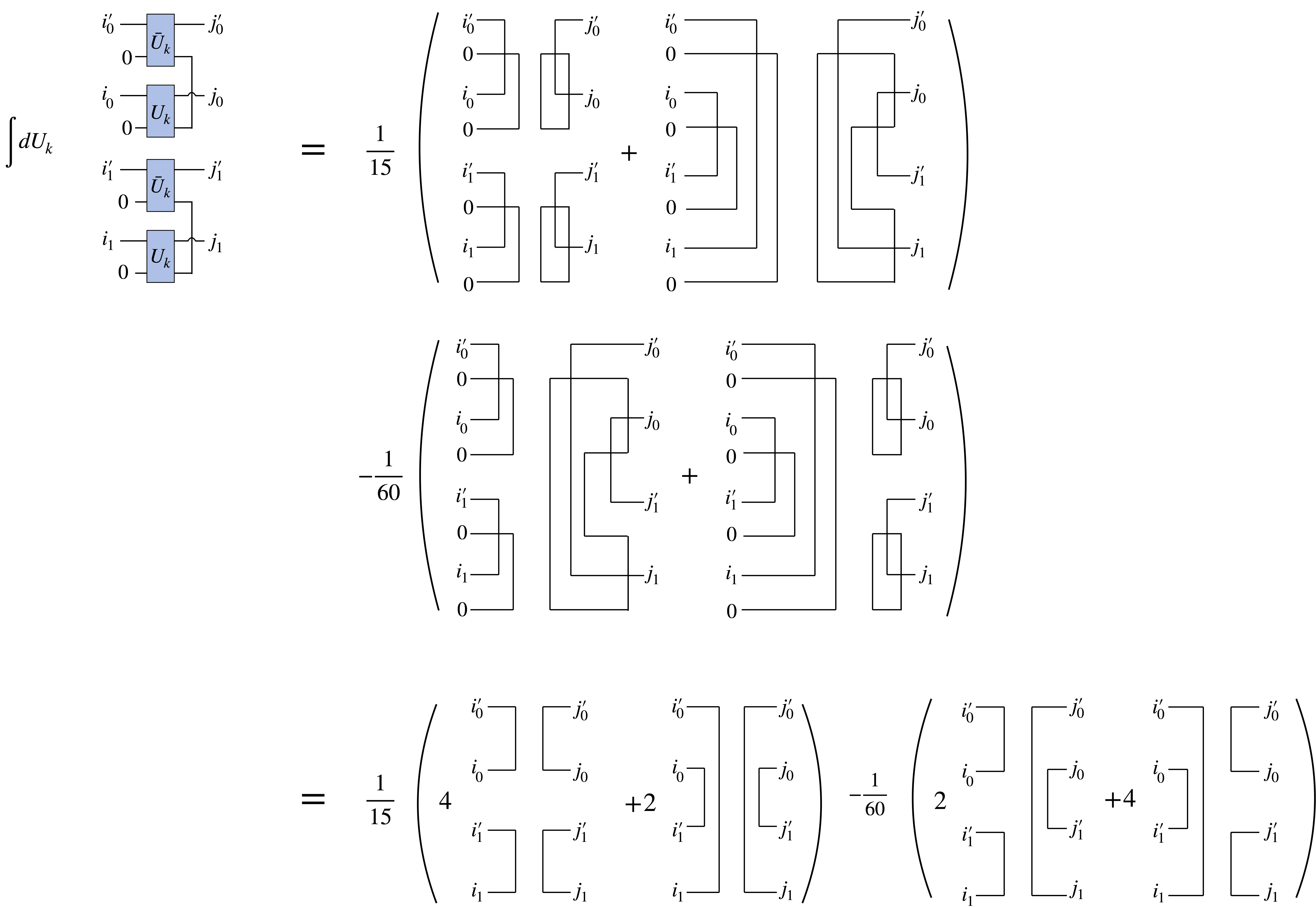}\hspace{3pt}, \tag{S19} \\[6pt]
  \end{array}
\end{equation}

We again denote the result of the integration up to the $(k-1)$-th unitary gate as $Q^{(2)}_{k-1}$ and derived 
\cref{eq:a_k,eq:b_k,eq:a_k_def} as the recursive expressions.

\begin{align}
  A_{k+1} & = \frac{16}{15} A_k - \frac{4}{60}A_k + \frac{4}{15} B_k - \frac{16}{60} B_k \nonumber \\
          & = A_k, \tag{S20} \label{eq:a_k} \\
  B_{k+1} & = \frac{8}{15} B_k - \frac{8}{60}B_k + \frac{8}{15} A_k - \frac{8}{60} A_k \nonumber \\
          & = \frac{6}{15} B_k + \frac{6}{15} A_k, \tag{S21} \label{eq:b_k}
\end{align}

\begin{equation}
  \label{eq:a_k_def}
  \begin{array}{c}
  \includegraphics[width=0.7\textwidth]{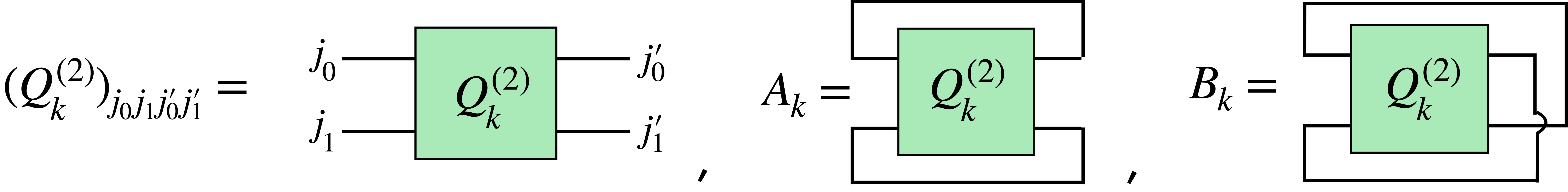}\hspace{3pt}, \tag{S22} \\[6pt]
  \end{array}
\end{equation}

We now give an interpretation of $A_k$ and $B_k$ appearing in \cref{eq:a_k,eq:b_k}. 
We denote $\Delta \rho_k$ as the reduced density matrix after the $k$-th unitary gate in~\cref{eq:delta_rho}.
Then, we identify $A_k$ with $\mathbb{E}[\mathrm{Tr}(\Delta \rho_k)^2]$, namely the integral of the square of the expected trace of $\Delta \rho_k$. 
Similarly, $B_k$ corresponds to $\mathbb{E}[\mathrm{Tr}(\Delta \rho_k^2)]$.
In this light, the fact that coefficient of $A_k$ is 1 in~\cref{eq:a_k} reflects the trace-preserving nature of the quantum gates. 
Moreover, since our initial state is taken to be $Z$, we have $A_0 = 0$ and $B_0 = 1$ at the very beginning.
Consequently, $A_{L-1} = 0$ and $B_{L-1} = {(\frac{6}{15})}^{L - 1}$.
As the variance of the $(j_0, j_0^\prime)$ element of $\Delta \rho$ is given by $(Q^{(2)}_{L})_{j_0, j_0, j_0^\prime, j_0^\prime}$,
and the $Q^{(2)}_k$ is given by linear combination of $A_k$ and $B_k$,
this result immediately proves that $\mathrm{Var}[\Delta \rho]$ decays as $O({(\frac{6}{15})}^{L})$ for the system size $L$.

\end{proof}

\begin{lemma}\label{lemma:grad_deltarho}
  Expectation value of the gradient of $\Delta \rho$ vanishes, and the variance decays exponentially as the system size $L$ increases,
  \begin{align}
    \mathbb{E}\left[\frac{\partial}{\partial U_k} \Delta \rho\right] &= \mathbf{0}, \tag{S23} \\
    \mathrm{Var}\left[\frac{\partial}{\partial U_k} \Delta \rho\right] &= O((\frac{6}{15})^{L}) \tag{S24}
  \end{align}
  Here $\frac{\partial}{\partial U_k}$ denotes the Euclidean derivative, treating 
  $U_k$  as a complex-valued  $4 \times 4$ matrix (as opposed to a point on the unitary group).
\end{lemma}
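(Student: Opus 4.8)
The plan is to reduce the gradient computation to a variant of the calculation already carried out for Lemma~\ref{lemma:mean-deltarho}. The key observation is that $\Delta\rho$ is multilinear in the gate tensors $U_j$ and $\bar U_j$ (each $U_j$ and its conjugate appear exactly once in the network of~\cref{eq:delta_rho_linear}), so the Euclidean derivative $\tfrac{\partial}{\partial U_k}\Delta\rho$ is obtained simply by deleting the $k$-th gate tensor $U_k$ from the diagram and leaving its four legs open; the conjugate gate $\bar U_k$ remains. Concretely, $\tfrac{\partial}{\partial U_k}\Delta\rho$ is a tensor whose extra open legs are the former bond indices of $U_k$, and whose remaining structure is the same one-dimensional chain with the $j\neq k$ gates still present. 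I would first make this "remove one gate, keep its conjugate" picture precise as a tensor-network diagram analogous to~\cref{eq:delta_rho_linear}.

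\medskip

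\noindent Next I would take the Haar expectation. For $\mathbb{E}[\tfrac{\partial}{\partial U_k}\Delta\rho]$ I integrate each surviving gate $U_j$ ($j\neq k$) against its partner $\bar U_j$ using the first-moment formula~\cref{eq:m1_tn}. This is exactly the contraction machinery used to derive the recursion~\cref{eq:exp_delta_rec}: to the left of site $k$ the chain again collapses to a multiple of the identity via the same $Q_{k-1}$ recursion, while at site $k$ the lone $\bar U_k$ has no partner and so contributes an overall factor that, after contracting with the $Z$ coming from the initial state / measurement structure, again carries a $\mathrm{Tr}[Z]=0$. Thus $\mathbb{E}[\tfrac{\partial}{\partial U_k}\Delta\rho]=\mathbf 0$ follows for the same reason as in Lemma~\ref{lemma:mean-deltarho}. (One must check the $\bar U_k$-with-no-partner term genuinely leaves a traceless insertion; this is the one place where the boundary structure of the First-Qubit Trigger construction enters.)

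\medskip

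\noindent For the variance, since the mean vanishes, $\mathrm{Var}[\tfrac{\partial}{\partial U_k}\Delta\rho]=\mathbb{E}[(\tfrac{\partial}{\partial U_k}\Delta\rho)^{\odot2}]$, and the relevant diagram is the doubled network of~\cref{eq:exp_delta} but with the $k$-th two-qubit gate (and its double) removed and the corresponding legs left open. Every surviving gate $U_j$, $j\neq k$, still appears with multiplicity two alongside two copies of $\bar U_j$, so the second-moment formula~\cref{eq:m2_tn} applies verbatim at each such site, and the very same transfer-matrix recursion~\cref{eq:a_k,eq:b_k} governs the propagation of the $(A_j,B_j)$ pair through those sites. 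The only modification is at site $k$: instead of the $2\times2$ mixing matrix $\bigl(\begin{smallmatrix}1&0\\ 6/15&6/15\end{smallmatrix}\bigr)$ one gets a different, $k$-dependent but $L$-independent, constant matrix coming from contracting only the two $\bar U_k$ copies (with no $U_k$ to pair with), plus possibly a local rescaling from the open legs. Since there are still $L-O(1)$ sites carrying the genuine mixing matrix whose $B$-eigenvalue is $6/15<1$, the product telescopes to $B_{L-1}\sim(6/15)^{L}$ up to an $O(1)$ prefactor, exactly as before, giving $\mathrm{Var}[\tfrac{\partial}{\partial U_k}\Delta\rho]=O((6/15)^{L})$.

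\medskip

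\noindent The main obstacle is the bookkeeping at the punctured site $k$: one must verify that deleting $U_k$ while keeping $\bar U_k$ (i) still yields a well-defined finite tensor with the advertised open-leg structure, (ii) does not spoil the vanishing of the mean — i.e. the leftover single $\bar U_k$ really does produce a traceless contraction against the rest — and (iii) contributes only an $L$-independent constant to the variance recursion, so that the exponential decay rate $6/15$ is inherited unchanged from the bulk sites. Establishing (ii)–(iii) amounts to re-running the arguments around~\cref{eq:exp_delta_rec} and~\cref{eq:exp_delta_4} with one site's local tensor replaced, which is routine given those computations but is where all the care is needed.
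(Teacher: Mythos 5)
Your overall plan (delete $U_k$ from the network, keep $\bar U_k$, and re-run the first- and second-moment recursions with site $k$ treated as a local modification) is the same as the paper's, but the two places where you defer the work are exactly where the argument actually lives, and the mechanisms you sketch for them do not hold as stated. For the mean, you propose to integrate only the gates $j\neq k$ and then argue the leftover contraction with the lone $\bar U_k$ is killed by $\mathrm{Tr}[Z]=0$. That works when $k$ lies strictly between the $Z$ insertion and the output (the collapse of the chain before site $k$ already produces the factor $\mathrm{Tr}[Z]$), but it fails for the gate acting directly on the $Z$ qubit: there the partially integrated expression is proportional to a matrix element of $(Z\otimes\sigma)\,U_k^\dagger$, which is not zero at fixed $U_k$ and is not traceless in any useful sense. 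The paper's argument is different and simpler: after the derivative, $\bar U_k$ has no partner, and the expectation over $U_k$ of an unpaired unitary vanishes, $\int dU_H\,U=0$, which kills the whole expression for every $k$ with no boundary analysis at all. Your obstacle (ii) is therefore not a routine check of your stated mechanism — the mechanism itself has to be replaced.

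For the variance there are two problems. First, your description of the punctured site is off: in $\mathbb{E}\bigl[\lvert\partial\Delta\rho/\partial(U_k)_{ab}\rvert^{2}\bigr]$ the conjugated copy of the derivative supplies a $U_k$, so site $k$ carries one $U_k$ and one $\bar U_k$ and is integrated with the \emph{first}-moment formula (this is what produces the harmless $O(1)$ factor the paper alludes to); a site carrying "two $\bar U_k$ copies with no $U_k$" would Haar-average to zero, not to a constant transfer matrix. Second, and more importantly, your telescoping argument — "there are still $L-O(1)$ sites with $B$-eigenvalue $6/15<1$, hence decay" — ignores that the recursion \cref{eq:a_k,eq:b_k} has an eigenvalue-one channel, $A_{k+1}=A_k$, which feeds into $B$ at every subsequent site. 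If the modified site injects any nonzero $A$-component, then $B_L$ saturates at an $L$-independent constant and the claimed $O((6/15)^{L})$ decay fails. The whole point of the paper's case analysis ($k=L-1$ versus $k=1$, and the remark that trace preservation forces $A_1=0$) is to show that the puncture never populates this eigenvalue-one channel; your obstacle (iii) names this but your argument as written would go through even if $A$ were nonzero, so it proves too much and hence does not prove the bound. To repair the proposal you need to (a) replace the mean argument by the unpaired-moment vanishing, and (b) show explicitly that the $A$-component remains zero across (and after) the punctured site before invoking the $6/15$ contraction of the $B$-channel.
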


\begin{proof}
We consider to take the derivative of $\Delta \rho$ with respect to $(U_k)_{ab}$.
Then a term such as~\cref{eq:grad_rho} appears.

\begin{equation}
  \label{eq:grad_rho}
  \begin{array}{c}
  \includegraphics[width=0.7\textwidth]{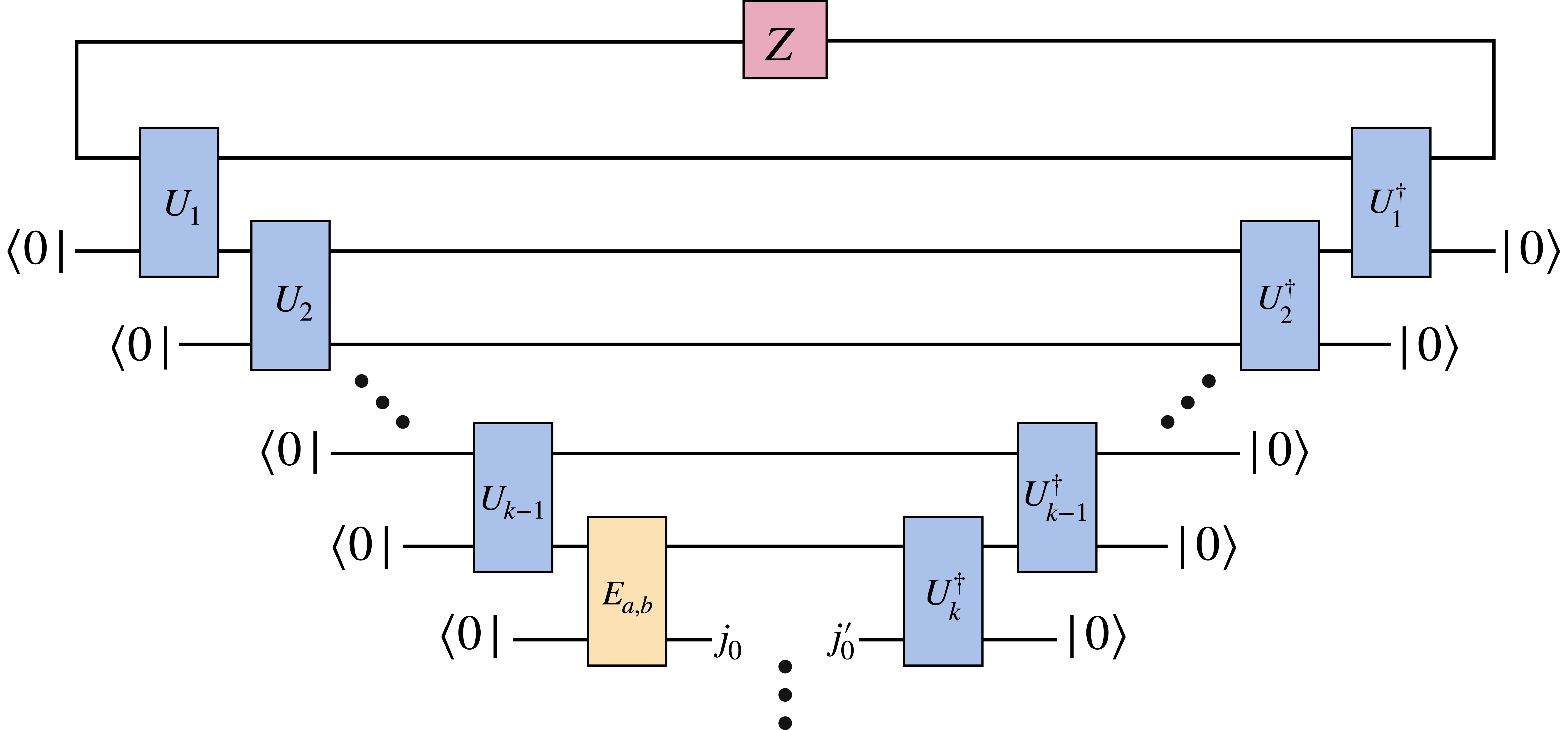}\hspace{3pt}, \tag{S25} \\[6pt]
  \end{array}
\end{equation}

Here, $E_{ab}$ is the unit matrix where the $(a,b)$ element is one and all other elements are zero.
By integrating~\cref{eq:grad_rho} over the Haar measure, and noting that $\int_{U(N)} dU_H \, U = 0$, one straightforwardly confirms that
$
\mathbb{E}[\frac{\partial}{\partial U_k} \Delta \rho] = \mathbf{0}.
$

It's also straightforward to show that 
$\mathrm{Var}[\frac{\partial}{\partial U_k} \Delta \rho] = O((\frac{6}{15})^{L})$, 
by using the same discussion in the proof of~\cref{lemma:mean-deltarho}.
We want to calculate $\mathbb{E}[{(\frac{\partial}{\partial (U_k)_{ab}} \Delta \rho)}^{\odot 2}]$.
Let us first consider the case $k = L-1$, i.e., the last unitary gate. 
In this situation, we can integrate over all unitary gates up to $L-2$ using \cref{eq:a_k,eq:b_k,eq:a_k_def};
since the final integration over $U_{L-1}$ does not alter the overall order, we conclude that 
$
\mathbb{E}[{(\frac{\partial}{\partial {(U_{L-1})}_{ab}} \Delta \rho)}^{\odot 2}] = O((\frac{6}{15})^{L}).
$
Conversely, for $k=1$, the integral over the subsequent unitary gates ($2,\ldots,L-1$) may again be evaluated via \cref{eq:a_k,eq:b_k,eq:a_k_def}. 
One might worry that a finite value of $A_{1}$ could propagate according to 
Eq.~(\ref{eq:a_k}), but such concern is circumvented by the fact that a CPTP map 
is trace-preserving, implying $A_{1} = 0$. 
Hence, we again find that 
$
\mathbb{E}[{(\frac{\partial}{\partial {(U_{1})}_{ab}} \Delta \rho)}^{\odot 2}] = O((\frac{6}{15})^{L}).
$
Applying the same reasoning to any unitary gate $k$ establishes 
$
\mathrm{Var}[\frac{\partial}{\partial U_{k}} \Delta \rho] = O((\frac{6}{15})^{L}).
$
\end{proof}

By using~\cref{lemma:mean-deltarho}, ~\cref{lemma:grad_deltarho} and chebyshev's inequality, we have
\begin{align}
  \Pr \Big( \big\| \Delta \rho_L(x_{2:L})
  \big\| & < \epsilon \Big) = \frac{1}{\epsilon^2} \mathcal{O} \Big( (\frac{6}{15})^{L} \Big), \tag{S26} \\
  \Pr \Big( \big\| \frac{\partial \Delta \rho_L(x_{2:L})}{\partial U_k}
  \big\| & < \epsilon \Big) = \frac{1}{\epsilon^2} \mathcal{O} \Big( (\frac{6}{15})^{L} \Big), \tag{S27}
\end{align}

Hence, we proved~\cref{thm:exp_decay}.

Next, we show that under this condition, training inevitably stalls 
for any choice of loss function. Consider a dataset $\mathcal{D}$ 
and define its loss $\mathcal{L}(\mathcal{D})$ by
\begin{equation}
    \mathcal{L}(\mathcal{D}) \;=\; \frac{1}{N_{\mathcal{D}}}\,
    \sum_{(\vec{x}_i,\,l_i)\,\in\,\mathcal{D}}\,\mathcal{L}\bigl(\vec{x}_i,\;l_i\bigr) \tag{S28},
\end{equation}
where $N_{\mathcal{D}}$ is the number of samples in $\mathcal{D}$, 
and $\mathcal{L}(\vec{x}_i, l_i)$ denotes the loss for each individual data sample 
$(\vec{x}_i, l_i)$. In the case of the “First-Qubit Trigger” dataset, 
the loss takes the form
\begin{align*}
  \label{eq:loss_first_qubit_trigger}
  \mathcal{L}(\mathcal{D}_{fq}) 
  &= \frac{1}{N_{\mathcal{D}_{fq}}}\,\sum_{x_{2:L},\,l}\,
  \mathcal{L}\bigl(\rho_L(l,\,x_{2:L}),\;l\bigr) \nonumber \\[4pt]
  &= \frac{1}{N_{\mathcal{D}_{fq}}}\,\sum_{x_{2:L}}
  \left[\mathcal{L}\bigl(\rho_L(0,\,x_{2:L}),\,0\bigr)\;+\;
  \mathcal{L}\bigl(\rho_L(1,\,x_{2:L}),\,1\bigr)\right] \tag{S29} \\
  &= \frac{1}{N_{\mathcal{D}_{fq}}}\,\sum_{x_{2:L}}
  \left[\mathcal{L}\bigl(\rho_L(0,\,x_{2:L}),\,0\bigr)\;+\;
  \mathcal{L}\bigl(\rho_L(0,\,x_{2:L}),\,1\bigr)\right]\;+\;
  \frac{\partial \mathcal{L}}{\partial \rho} \Delta \rho \tag{S30}.
\end{align*}
where $\rho_L(l, x_{2:L})$ is defined in~\cref{thm:exp_decay} and 
we used the fact that $\rho_L(0, x_{2:L}) - \rho_L(1, x_{2:L}) = O((\frac{6}{15})^L)$.
We can reasonably assume that the effect of the term 
$\mathcal{L}(\rho_L(0, x_{2:L}), 0) + \mathcal{L}(\rho_L(0, x_{2:L}), 1)$
never contribute to distinguishing $\rho_L(0, x_{2:L})$ from $\rho_L(1, x_{2:L})$,
as the term merely enforce the output from the qMPS-classifiers to be (0.5, 0.5),
which is the balanced outcome.
Even worse, this term only makes the training harder.
Thus, no matter which loss function is employed, the exponential decay of the gradient of $\Delta \rho_L$ 
ensures that training become exponentially hard as $L$ grows. 

\begin{figure}[htp]
  \centering
  \subfloat[]{\includegraphics[width=0.5\textwidth]{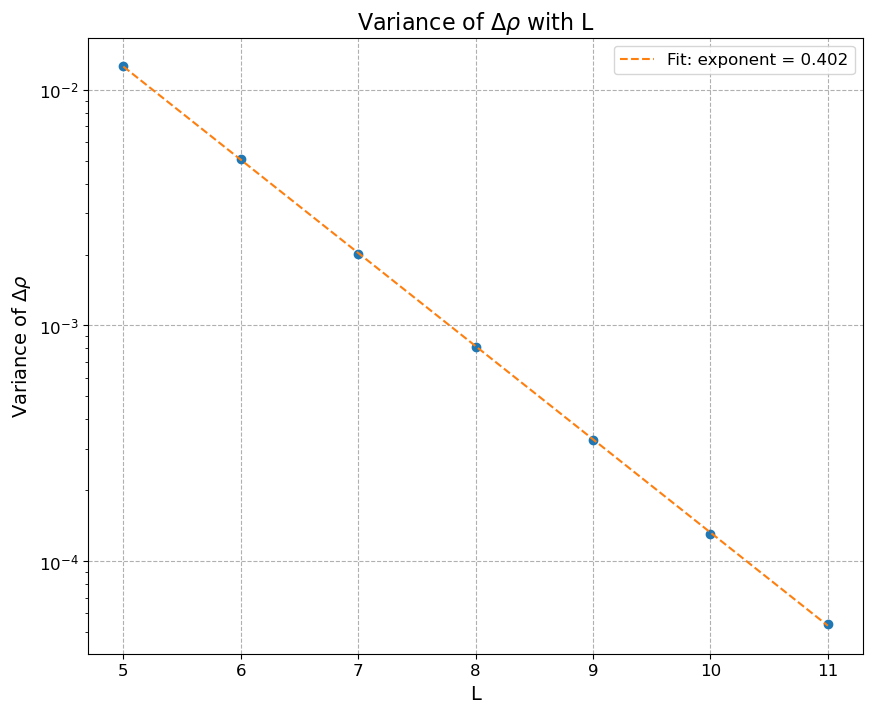}}
  \subfloat[]{\includegraphics[width=0.5\textwidth]{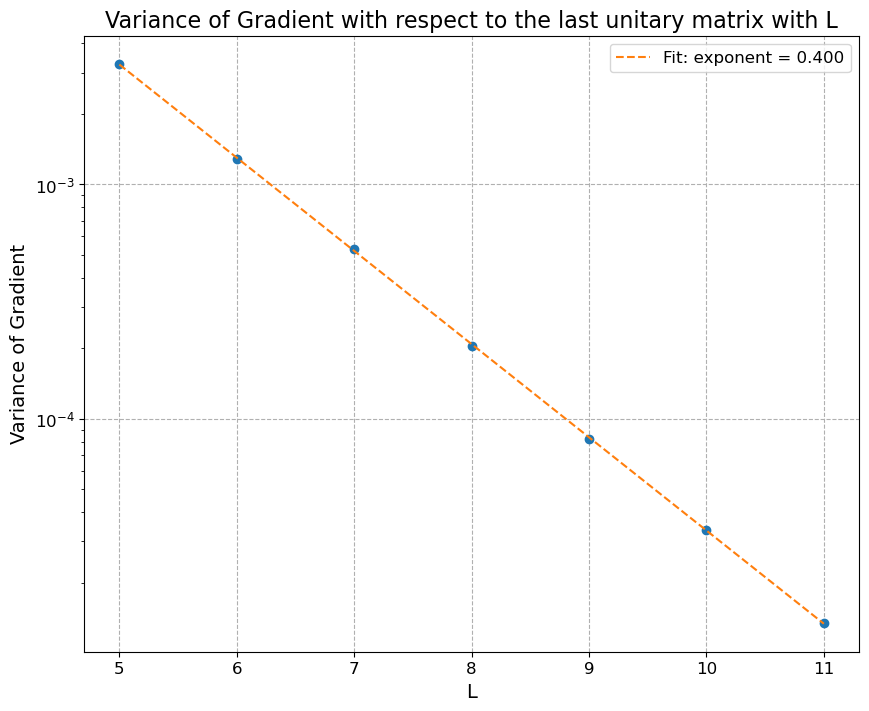}}
  \caption{(a) Graph of $\Delta\rho_L$ vs. $L$. (b) Graph of $\frac{\partial \Delta\rho_L}{\partial U_L}$ vs. $L$.
  We calculated the variance and mean by sampling 5000 random initialization.
  }\label{fig:bp_graph}
\end{figure}

\cref{fig:bp_graph} confirms numerically that 
both $\Delta\rho_L$ and $\frac{\partial \Delta\rho_L}{\partial U_L}$ 
scale as $O((\frac{6}{15})^L)$.
% and~\cref{fig:bp_loss} numerically verifies that 
% the training stagnates due to the exponentially vanishing gradient with NLL loss.

\subsection{Proof of Theorem 2: Absence of Barren Plateaus in Classical MPS}
As outlined in Theorem~\cref{thm:mps_training} and~\cref{sssec:supp_mps_impl}, we assume 
that the MPS is initialized by stacked identity with small random noise as in~\cref{eq:mps_init}. 
Using a tensor-network representation, the function 
\(f_{\ell}(\vec{x}) = W^{\ell}\cdot \phi(\vec{x})\) 
can be expressed as in~\cref{eq:mps_tn}.

\begin{equation}
  \label{eq:mps_tn}
  \begin{array}{c}
  \includegraphics[width=0.4\textwidth]{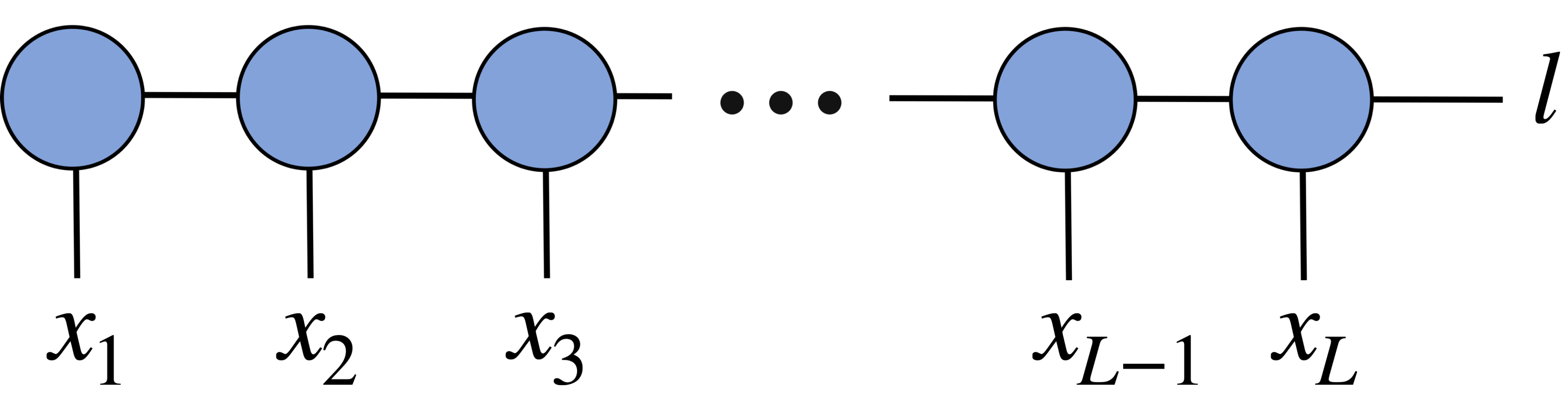}\hspace{3pt}, \tag{S31} \\[6pt]
  \end{array}
\end{equation}
Here, $\phi(x) = (1 - x,\, x)$, and for any input $x_{2:L}$ 
we define the matrix $M(x_{2:L})$ via~\cref{eq:M},
\begin{equation}
  \label{eq:M}
  \begin{array}{c}
  \includegraphics[width=0.6\textwidth]{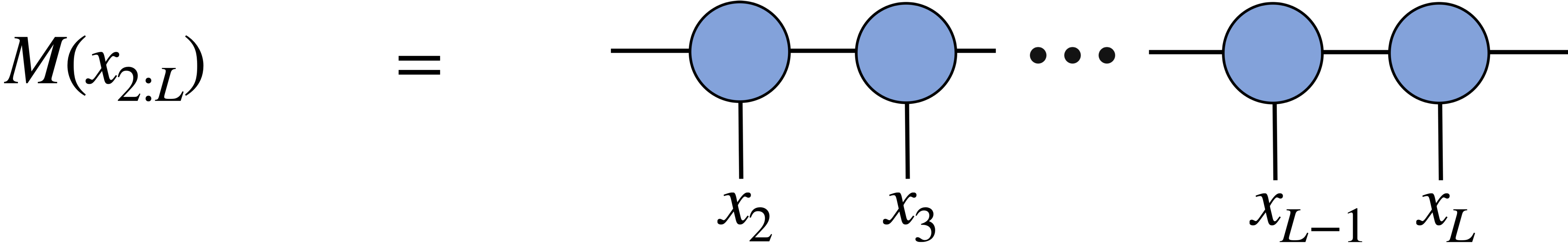}\hspace{3pt}, \tag{S32} \\[6pt]
  \end{array}
\end{equation}

Recalling our stacked identity initialization of the MPS, it follows that
\begin{equation*}
M(x_{2:L}) 
= \mathbb{I} + N(x_{2:L}) + o({(\sqrt{L-1}\epsilon)}), \tag{S33}
\end{equation*}
where 
\begin{equation*}
  N(x_{2:L}) 
  = \sum_{i=2}^L N^{x_i}. \tag{S34}
\end{equation*}
This behaves like a normal distribution with mean 0 and variance $(L-1)\epsilon^2$, which means the magnitude of $N(x_{2:L})$ is $O(\sqrt{L-1}\epsilon)$.
One can set $\epsilon = \epsilon' / \sqrt{L}$ such that $\epsilon^\prime << 1$ and drop the $o(\sqrt{L-1}\epsilon)$ term and we obtain the simple leading form
\begin{equation*}
  M(x_{2:L}) \approx \begin{pmatrix}
    1 +  N_{11}(x_{2:L}) & N_{12}(x_{2:L}) \\
    N_{21}(x_{2:L}) & 1 + N_{22}(x_{2:L}).
  \end{pmatrix} \tag{S35}
\end{equation*}
Here $N_{ij}(x_{2:L})$ is the $(i,j)$-th element of $N(x_{2:L})$.
When class label is binary i.e. $\ell \in \{0, 1\}$, by multiply the leftmost tensor with 
$M(x_{2:L})$, we obtain the MPS output as

\begin{equation*}
  \label{eq:fl}
  f_{l}(\vec{x}) = 
  \begin{pmatrix}
    1-x_1 & x_1
  \end{pmatrix}
  \begin{pmatrix}
    1 + a & 1 + b \\
    1 + c & 1 + d
  \end{pmatrix}
  \begin{pmatrix}
    1 +  N_{11}(x_{2:L}) &  N_{12}(x_{2:L}) \\
     N_{21}(x_{2:L}) & 1 +  N_{22}(x_{2:L})
  \end{pmatrix}
  \begin{pmatrix}
    1 - l \\
    l
  \end{pmatrix}. \tag{S36}
\end{equation*}
Here, the matrix $A^{s_1}_{i_1}$ is expressed as~\cref{eq:A1}. 

\begin{equation*}
  \label{eq:A1}
  A^{s_1}_{i_1} = 
  \begin{pmatrix}
    1 + a & 1 + b \\
    1 + c & 1 + d
  \end{pmatrix} 
  , \tag{S37}
\end{equation*}
where $a,b,c,d$ are drawn from a normal distribution with mean 0 and variance $\epsilon^2$.

~\cref{eq:fl} allows us to interpret $f_{l}(\vec{x})$ as $(x_1, l)$ element of matrix $F$ with the relation $f_{l}(\vec{x}) = F(x_{2:L})_{x_1 l}$,
where $F(x_{2:L})$ is given by
% The matrix representation of $f_{l}(\vec{x})$ is given by~\cref{eq:fl_leading}.

\begin{equation*}
  \label{eq:fl_leading}
  F(x_{2:L}) = 
  \begin{pmatrix}
    1 + a +  N_{11} + N_{21}  & 1 + b + N_{12} + N_{22} \\
    1 + c +  N_{11} + N_{21}  & 1 + d + N_{12} + N_{22}
  \end{pmatrix}. \tag{S38}
\end{equation*}

As in the main text, we consider to apply the softmax activation function, followed by the NLL to calculate the loss function

\begin{align}
  \label{eq:loss_mps}
  \mathcal{L}(\mathcal{D}_{fq}) &= - \frac{1}{N_{\mathcal{D}_{fq}}} \sum_{\vec{x}_i, l_i} \mathcal{L}(\vec{x}_i, l_i) \nonumber \\
  &= - \frac{1}{N_{\mathcal{D}_{fq}}} \sum_{(\vec{x}_i, l_i) \in \mathcal{D}_{fq}} 
  \log \frac{e^{f_{l_i}(\vec{x}_i)}}{e^{f_{l_i}(\vec{x}_i)} + e^{f_{1-l_i}(\vec{x}_i)}} \nonumber \\
  &= - \frac{1}{N_{\mathcal{D}_{fq}}} \sum_{x_{2:L}}  \ell_{\mathrm{LS}}{(F(x_{2:L})_{x_1, :})}_0 +  \ell_{\mathrm{LS}}{(F(x_{2:L})_{x_1, :})}_1. \tag{S39}
\end{align}

Here, $\ell_{\mathrm{LS}}(F(x_{2:L})_{x_1, :})$ represents the log-softmax. 
Note that $F(x_{2:L})_{x_1, :} = (F(x_{2:L})_{x_1, 0}, F(x_{2:L})_{x_1, 1})$ is a 2-dimensional vector
and $\ell_{\mathrm{LS}}(\cdot)$ can be seen as the function that takes a 2-dimensional vector and returns a 2-dimensional vector.

We use the following expansion of the log-softmax function to first order in $\Delta f$,
\begin{equation*}
\begin{pmatrix}
 \ell_{\mathrm{LS}}(1+\Delta f_1, 1+\Delta f_2)_0\\
 \ell_{\mathrm{LS}}(1+\Delta f_1, 1+\Delta f_2)_1
\end{pmatrix}
\;=\;
\begin{pmatrix}
 \log\frac{1}{2} \\[4pt]
 \log\frac{1}{2}
\end{pmatrix}
\;+\;\frac{1}{2}
\begin{pmatrix}
 1 & -1 \\[4pt]
 -1 & 1
\end{pmatrix}
\begin{pmatrix}
 \Delta f_1 \\[3pt]
 \Delta f_2
\end{pmatrix}
+ O(\Delta f^2)
 ,
 \tag{S40}
\end{equation*}
which yields, for the following approximation of~\cref{eq:loss_mps}, 
\begin{equation*}\label{eq:loss_mps_expansion}
- \frac{1}{N_{\mathcal{D}_{fq}}}\sum_{x_{2:L}} \Bigl[ \log(\frac{1}{2}) + \frac{1}{2}(b+c-a-d)\Bigr] + O(\epsilon^\prime) 
= - \frac{1}{4}(b+c-a-d) + O(\epsilon^\prime) + const. \tag{S38}
\end{equation*}
Note that we have $2^{L-1}$ summations in $\sum_{x_{2:L}}$ and $N_{\mathcal{D}_{fq}} = 2^{L-1}$.
Hence, the partial derivative with respect to the left-most tensor becomes
\(\bigl|\tfrac{\partial\mathcal{L}}{\partial A^{s_1}_{i_1}}\bigr|\approx \tfrac14 + O(\epsilon')\).
This implies that the MPS-classifiers initialized with a stacked identity does not does exhibit barren plateaus for this First-Qubit-Trigger dataset. 
While this alone suffices to show the absence of barren plateaus, one can further check that 
\(\bigl\|\tfrac{\partial \mathcal{L}}{\partial A^{s_k}_{i_k, i_{k+1}}}\bigr\|\! = O(\epsilon')\)
for \(k>1\), which does not scale with $L$ if one set $\epsilon = \epsilon' / \sqrt{L}$ with $\epsilon^\prime$ being small constant such as $0.1$.
We can prove this straightforwardly by expanding the log-likelihood to second order in $\epsilon'$, since the effect of 
$A^{s_k}_{i_k, i_{k+1}}$ does not appear in the first order expansion~\cref{eq:loss_mps_expansion}, 
but we omit the detail here.
Both are numerically verified in~\cref{fig:no_bp}.
Intuitively, this result is consistent with the fact that, unlike a qMPS whose sequential unitaries progressively erase the information of the prior-layer,
MPS-classifiers using a near-identity initialization keep the information of the prior-layer and 
propagate it forward.
As discussed in the main text, absence of barren plateaus does not guarantee the MPS-classifiers can find the optimal parameters for the First-Qubit-Trigger dataset.
In~\cref{sec:numerical_verification}, we numerically verify that the MPS-classifiers can indeed find the optimal parameters for the First-Qubit-Trigger dataset.
Also some readers may wonder what if one initialize the qMPS-classifiers by embedding the MPS-classifiers with stacked identity initialization with random noise.
We will discuss this as well in~\cref{sec:numerical_verification}.

% These observations are further supported by noting that applying right-canonicalization to the stacked-identity MPS (thus enforcing isometries) can still trigger a barren plateau: 
% once each layer becomes an isometry, the gradient decays exponentially with $L$. 
% This illustrates that even models returning identical outputs up to scaling can suffer gradient collapse if one imposes unitary or isometry constraints, 
% underscoring the strong link between such constraints and the onset of barren plateaus.

\begin{figure}[htp]
  \centering
  \subfloat[]{\includegraphics[width=0.5\textwidth]{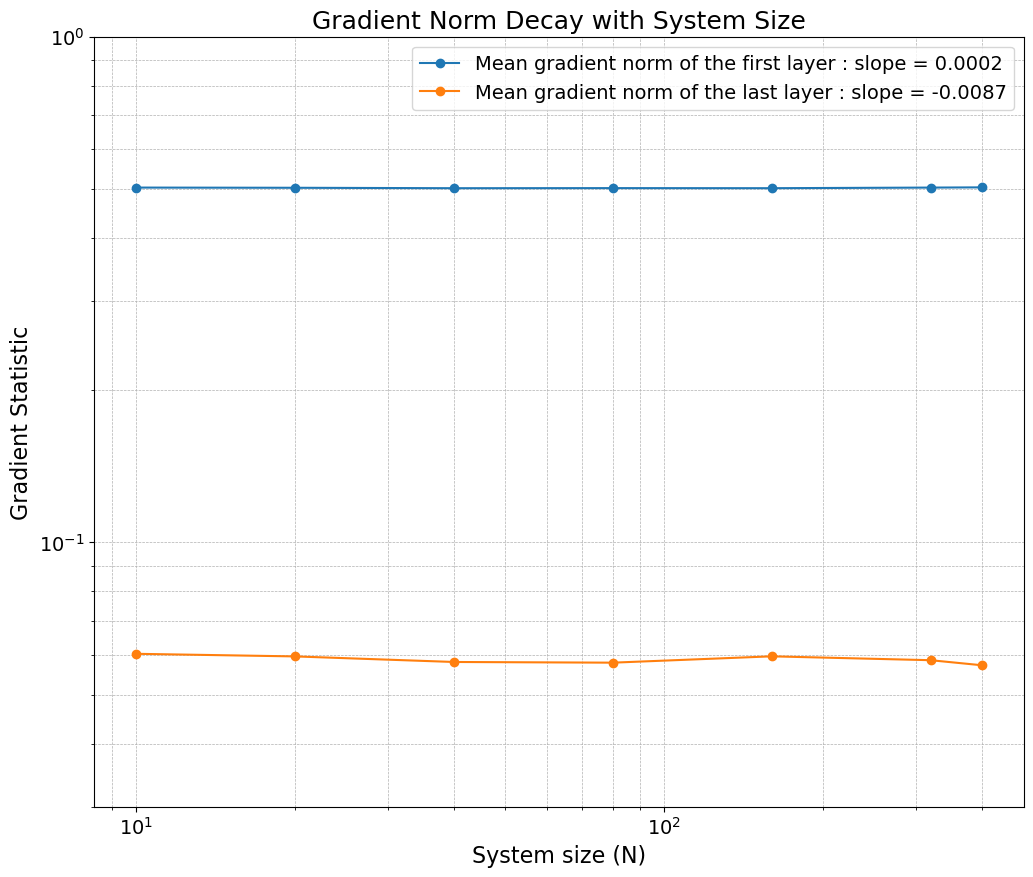}}
  \caption{The gradients of the loss function do not scale with the system size $L$ for the MPS-classifier with stacked identity initialization.
  We used $\epsilon^\prime = 0.1$ and run 1000 independent trials to estimate the norm of the gradient of the loss function for different system sizes $L$.
  }\label{fig:no_bp}
\end{figure}

\subsection{Proof of Theorem 3: Exponential Post-Selection Cost}
In what follows, we show that embedding MPS-classifiers capable of perfectly 
classifying the first-qubit trigger dataset into a PQC inevitably requires 
post-selection with an exponentially low success probability. We begin by 
discussing the general aspects of post-selection. We express the MPS-classifiers
in the right-canonical form as in~\cref{eq:mps-canonical}. 
\begin{equation}\label{eq:mps-canonical}
  \begin{array}{c}
  \includegraphics[width=0.6\textwidth]{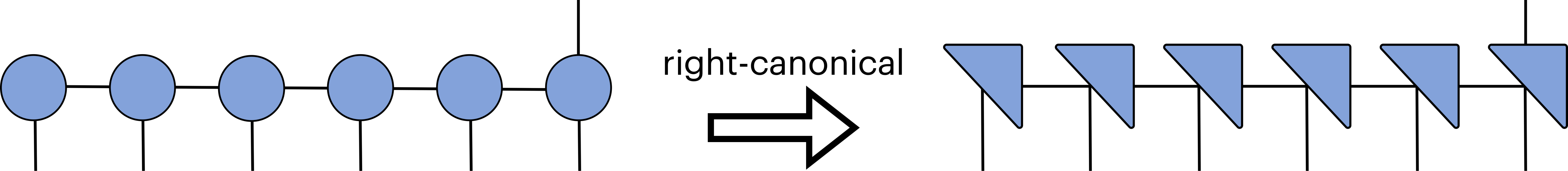}\hspace{3pt}. \tag{S39} \\[6pt]
  \end{array}
\end{equation}

We omit the canonical center in~\cref{eq:mps-canonical}
as it will be absorbed into the 
measurement operator $\hat{M}$ when embedding into quantum circuits and to not affect the post-selection.
Under these conditions, the tensor $W$ can be 
expressed as
\begin{equation*}\label{eq:mps-state} 
W = \lvert 0\rangle\langle W_0 \rvert + \lvert 1\rangle\langle W_1 \rvert. \tag{S40}
\end{equation*} 
Note that from~\cref{eq:mps-canonical}, the Frobenius norm of $W$ is $\sum_\ell = 2$.

\begin{lemma}\label{lemma:post_selection} 
When one embeds the MPS-classifiers of~\cref{eq:mps-state} into qMPS 
circuits exactly, the required post-selection success probability for any input 
$\lvert \phi \rangle$ is $\sum_\ell \|\langle W_\ell \lvert \psi \rangle\|^2$. 

\begin{proof}
First, we embed the MPS-classifiers in right-canonical form into qMPS 
circuits with the following~\cref{eq:mps-pqc}
\begin{equation}\label{eq:mps-pqc}
  \begin{array}{c}
  \includegraphics[width=0.6\textwidth]{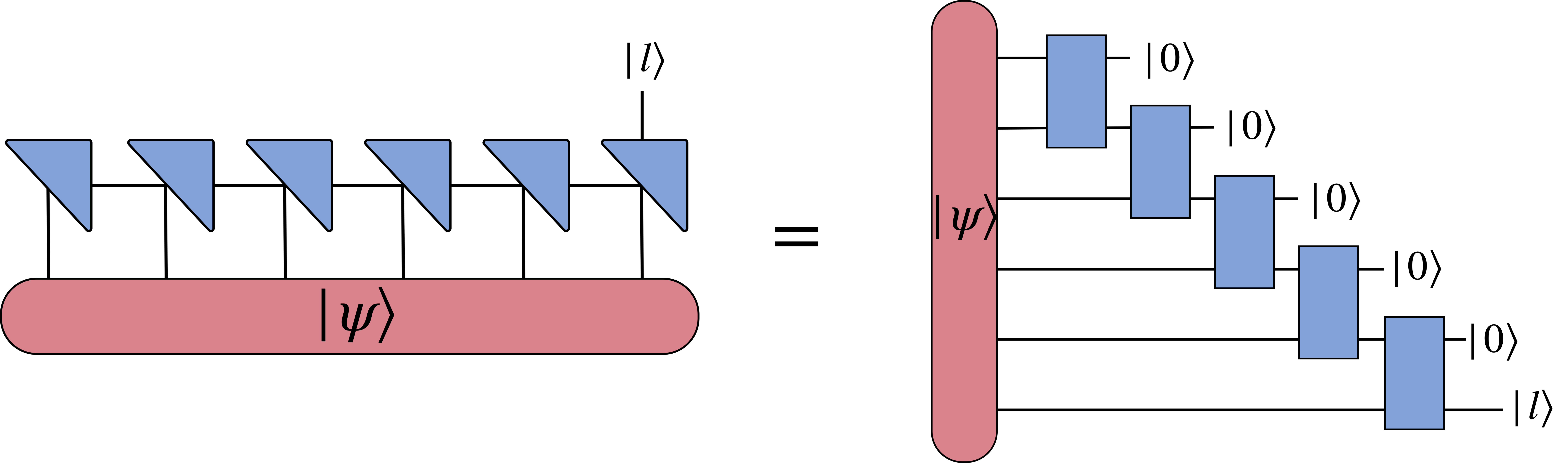}\hspace{3pt}. \tag{S41} \\[6pt]
  \end{array}.
\end{equation}

By noting that the success probability $P_{\text{success}}$ is given by
\begin{equation*} 
P_{\text{success}} = \sum_\ell \bigl\lVert \langle 0,0,\dots,0,\ell \vert 
U_{\text{qMPS}} \lvert \psi\rangle \bigr\rVert^2 , \tag{S42}
\end{equation*}
we can obtain that 
\begin{equation*} 
P_{\text{success}} = \sum_\ell | \langle \ell | W^\ell | \psi \rangle |^2 = \sum_\ell \|\langle W_\ell \lvert \psi \rangle\|^2.
\end{equation*}
\end{proof}
\end{lemma}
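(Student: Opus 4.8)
The plan is to unpack the definition of post-selection success probability for the qMPS circuit in Eq.~(S41) and reduce it to a statement about the MPS tensor $W$. First I would recall that in the embedding of Eq.~(S41), the classical MPS in right-canonical form is realized by a cascade of unitaries $U_{\text{qMPS}}$ acting on the input state $\lvert\psi\rangle$ together with fresh ancillas prepared in $\lvert 0\rangle$; the classification readout corresponds to measuring all but one qubit in the computational basis and post-selecting on the all-zero outcome $\lvert 0,0,\dots,0\rangle$, while the surviving qubit carries the label index $\ell$. So the success probability is exactly the norm-squared of the projected state summed over the label outcomes, $P_{\text{success}} = \sum_\ell \bigl\lVert \langle 0,\dots,0,\ell\rvert U_{\text{qMPS}}\lvert\psi\rangle\bigr\rVert^2$, which is the displayed Eq.~(S42).

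Next I would identify the right-hand side of Eq.~(S42) with a matrix element of $W$. The key observation is that the right-canonical gauge condition (the isometry property of each MPS tensor, stated implicitly by Eq.~(S39)) is precisely what guarantees the circuit in Eq.~(S41) is a valid unitary dilation, and it makes the partial overlap $\langle 0,\dots,0,\ell\rvert U_{\text{qMPS}}\lvert\psi\rangle$ collapse to $\langle\ell\rvert W^\ell\lvert\psi\rangle$ --- the contraction of the whole MPS (with its bra legs fixed to $\langle 0\rvert$) against the input, with the open physical leg set to the label $\ell$. Diagrammatically this is just "closing up" the telescoping chain of isometries from Eq.~(S39) against Eq.~(S41). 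Then I would invoke the block decomposition $W = \lvert 0\rangle\langle W_0\rvert + \lvert 1\rangle\langle W_1\rvert$ from Eq.~(S40) to read off $\langle\ell\rvert W^\ell\lvert\psi\rangle = \langle W_\ell\lvert\psi\rangle$, so that $P_{\text{success}} = \sum_\ell \lvert\langle\ell\rvert W^\ell\lvert\psi\rangle\rvert^2 = \sum_\ell \lVert\langle W_\ell\lvert\psi\rangle\rVert^2$, which is the claimed formula.

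The main obstacle, and the step that deserves the most care, is justifying the collapse of the multi-qubit overlap $\langle 0,\dots,0,\ell\rvert U_{\text{qMPS}}\lvert\psi\rangle$ down to the single matrix element $\langle\ell\rvert W^\ell\lvert\psi\rangle$: this relies on the right-canonical form absorbing the canonical center into the measurement operator $\hat M$ (as noted after Eq.~(S39)) and on each embedded unitary reducing to the corresponding MPS tensor once its incoming ancilla is $\lvert 0\rangle$ and its outgoing qubit is projected to $\langle 0\rvert$. I would make this precise by an induction on the number of sites, peeling off one isometry at a time, or equivalently by appealing to the standard equivalence between sequentially-generated states and MPS in canonical form. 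Everything else --- expanding $W$, taking norms, summing over $\ell$ --- is routine bookkeeping, so I would present the diagrammatic contraction as the crux and keep the algebra terse.
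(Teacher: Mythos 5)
Your proposal follows essentially the same route as the paper's proof: write the post-selection success probability as $\sum_\ell \lVert \langle 0,\dots,0,\ell\rvert U_{\text{qMPS}}\lvert\psi\rangle\rVert^2$ and then identify each amplitude with $\langle W_\ell\lvert\psi\rangle$ via the exact right-canonical embedding and the decomposition $W=\lvert 0\rangle\langle W_0\rvert+\lvert 1\rangle\langle W_1\rvert$. The only difference is that you spell out (by peeling off isometries site by site) the collapse step that the paper simply asserts diagrammatically, which is a correct and harmless elaboration.
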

Suppose that for each input $\vec{x}_i$ in the dataset $D$, the embedded quantum states $\phi(\vec{x}_i)$ are perpendicular to each other, i.e., 
\begin{equation}
\langle \phi(\vec{x}_i) \mid \phi(\vec{x}_j)\rangle = 0 
\quad \text{for} \quad i \neq j.
\end{equation}
Let $D_\ell \subseteq D$ be the subset of data points with class label $\ell$, and then 
the following $W^\ell$ is the optimal tensor-network classifiers in terms of the training dataset.
\begin{equation}
\lvert W_\ell\rangle 
=
\frac{1}{\sqrt{N_{\mathcal{D}_\ell}}}
\sum_{\vec{x}\,\in\,D_\ell}
\lvert \phi(\vec{x})\rangle,
\end{equation}
The above discussion shows that any architecture of TN-classifiers will be trained to approximate $W$ with a given TN structure.
For the first-qubit trigger dataset, one can construct MPS-classifiers that exactly represent this $W$ with the parameters in~\cref{lemma:perfect_classifier}.
Therefore, by the~\cref{lemma:post_selection}, the post-selection success probability $P_{\text{success}}$ for any bit string $\vec{x}$ in $D_{fq}$ then equals
\begin{equation}
P_{\text{success}} = \sum_\ell \|\langle W_\ell \lvert \phi(\vec{x}) \rangle\|^2 = 2^{-(L - 1)}.
\end{equation}

\subsection{Numerical Verification of Trainability of MPS-classifiers and qMPS-classifiers}
\label{sec:numerical_verification}

We first present numerical results of optimizing a MPS-classifier on the First-Qubit-Trigger dataset with $L=1024$.
~\cref{fig:trainable_fq} displays the loss function and accuracy during training. 
Evidently, the MPS-classifier find parameters that perfectly classify this dataset 
without becoming trapped in local minima.

\begin{figure}[htp]
  \centering
  \includegraphics[width=0.5\textwidth]{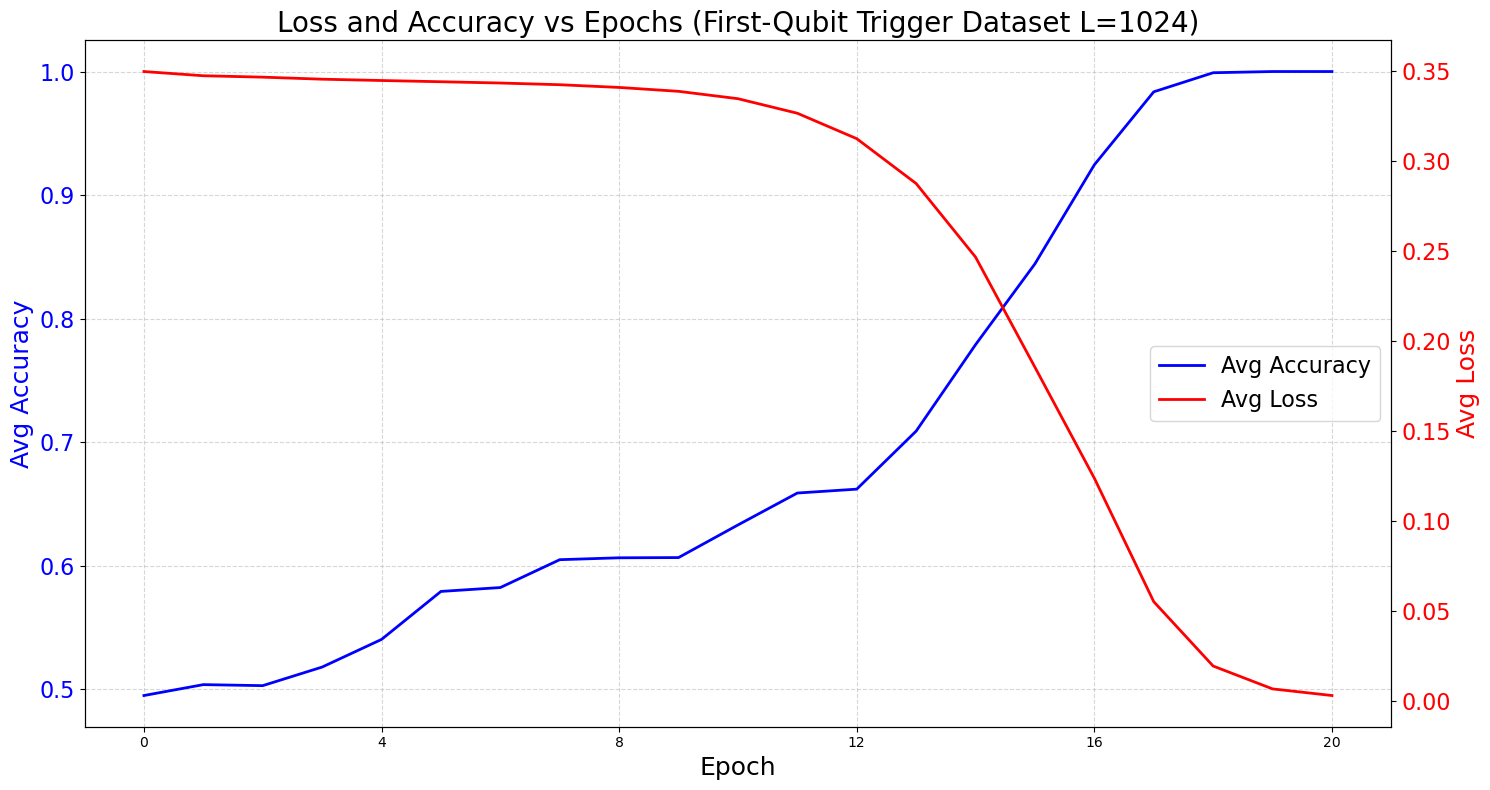}
  \caption{The loss function and accuracy of the MPS-classifier for the First-Qubit-Trigger dataset with $L=1024$.}
  \label{fig:trainable_fq}
\end{figure}

\begin{figure}[htp]
  \centering
  \subfloat[]{\includegraphics[width=0.5\textwidth]{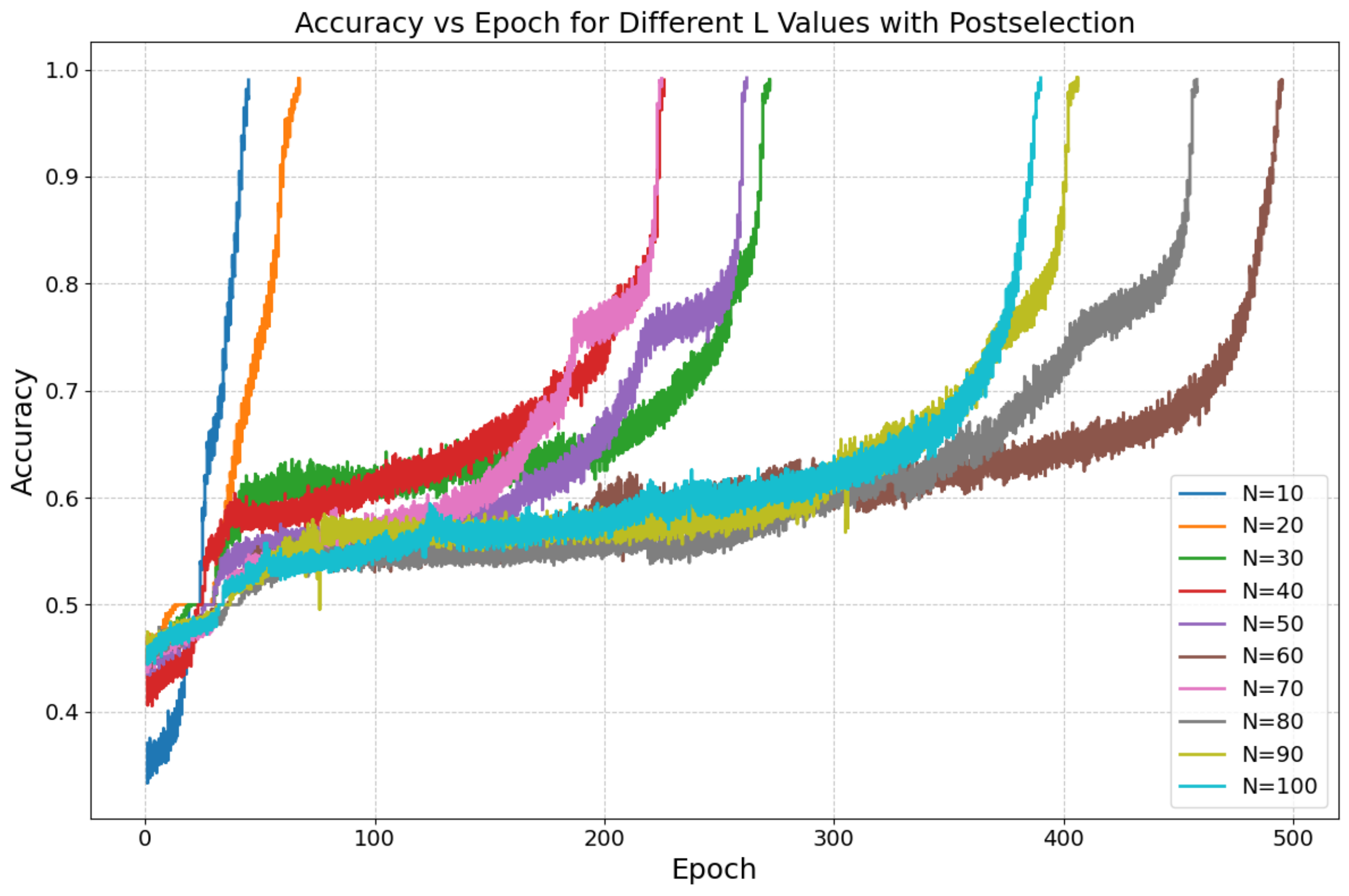}}
  \subfloat[]{\includegraphics[width=0.5\textwidth]{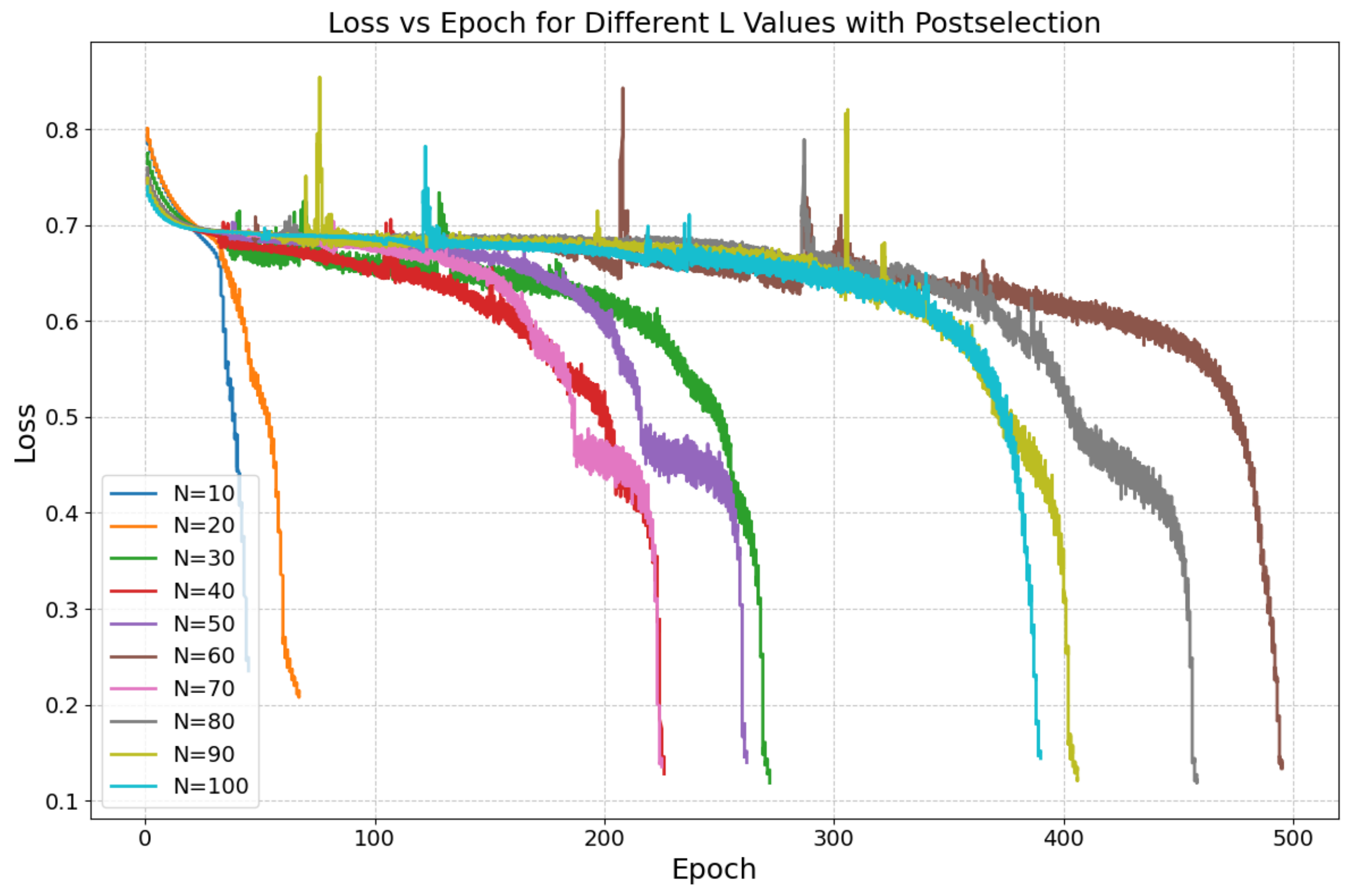}}
  \caption{(a) The accuracy and (b) the loss function of the qMPS-classifier for the First-Qubit-Trigger dataset with the size $L=10, 20, \dots, 100$.}
  \label{fig:qMPSWP_fq}
\end{figure}

We next consider qPMS-classifiers (with and without postselection) 
initialized by embedding the MPS-classifiers with stacked identity initialization plus random noise.
~\cref{fig:qMPSWP_fq} shows that when qMPS-classifiers with postselection is initialized in this way, 
the model can actually find the optimal parameters for the First-Qubit-Trigger dataset.
This is not surprising as qMPS-classifiers with postselection is essentially the same as the MPS-classifiers.
However, while the MPS-classifiers require only about 20 epochs of training for $L=1024$, 
the qMPS-classifiers with postselection needs roughly 400 epochs even for $L=100$. 
This indicates that forcing the parameters to be unitary or isometric significantly 
complicates the optimization, in addition to the higher computational cost. 
In practice, there is thus reason to prefer postselection-based qMPS-classifiers
over direct MPS-classifiers, since the latter can be embedded exactly into the former.

\begin{figure}[htp]
  \centering
  \includegraphics[width=0.7\textwidth]{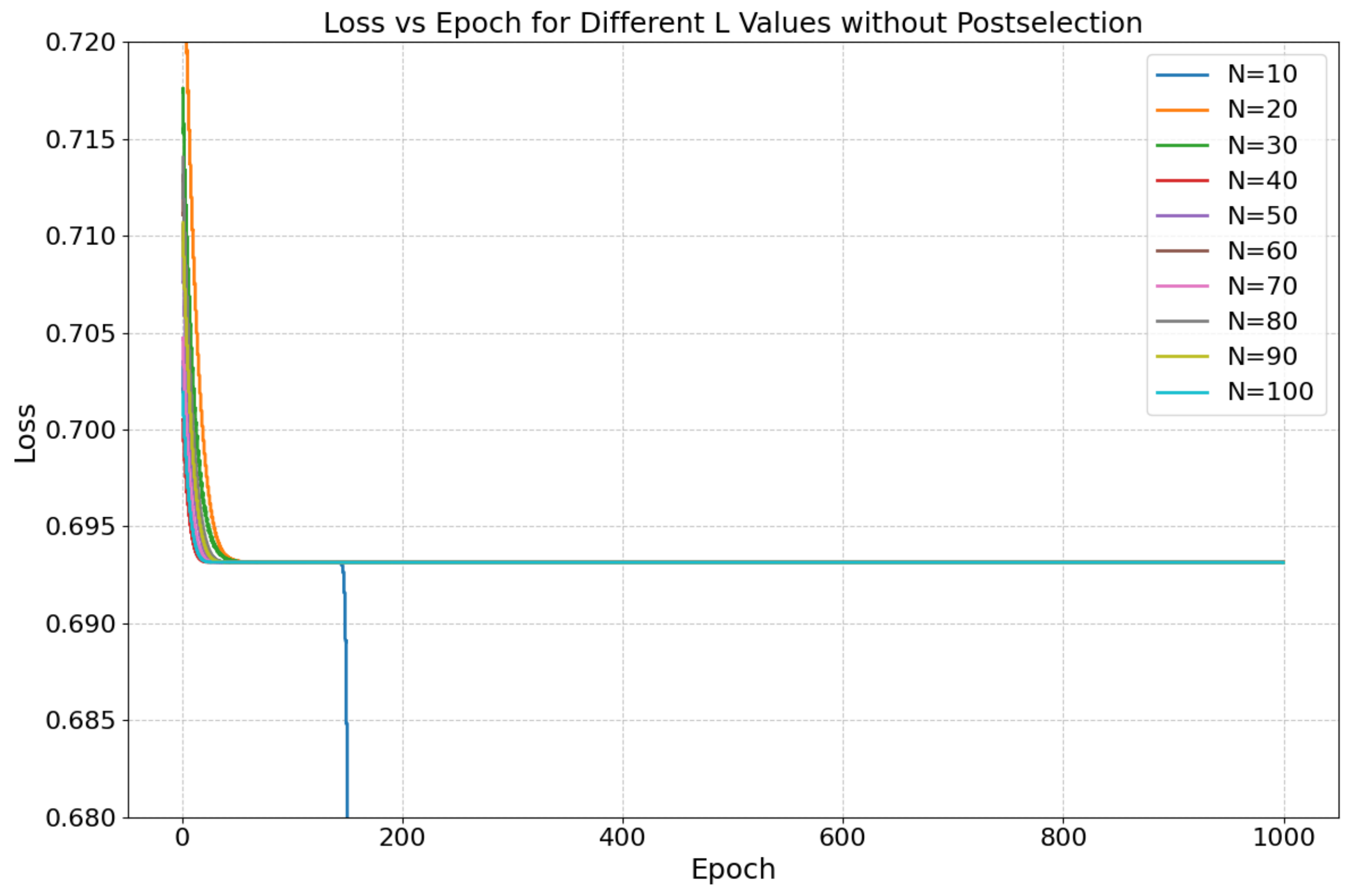}
  \caption{Loss function of the qMPS-classifier \emph{without} postselection on the First-Qubit-Trigger dataset 
  (initialization identical to the postselection case). Only the $L=10$ model escapes the trivial solution at $\approx 0.693$.}
  \label{fig:qMPSWOP_fq}
\end{figure}
Finally,~\cref{fig:qMPSWOP_fq} shows the results for the qMPS-classifier \emph{without} postselection, 
initialized in the same manner. In this case, all models with $L>10$ remain trapped in the trivial solution 
at loss $-\log(0.5)=0.693$, producing a fifty-fifty output $(0.5,\,0.5)$ for any input. 
Hence, even when not initialized with Haar-random unitaries, qMPS-classifiers without postselection 
exhibits exponential training barriers, symptomatic of barren plateaus.

% \end{document}

\end{document}